\definecolor{lightblue}{rgb}{0.68, 0.85, 0.9}
 \newtheorem{corollary}{Corollary}
\newtheorem*{prop*}{Proposition*}
\newtheorem{definition}{Definition}
\newtheorem{lemma}{Lemma}
\newtheorem{remark}{Remark}
\begin{document}

\preprint{APS/123-QED}

\title{A matching decomposition algorithm for simulating quantum walk Hamiltonians}

\author{Mostafa Atallah$^{1,3}$\orcidlink{0009-0004-8187-6932}}
\author{Alvin Gonzales$^2$\orcidlink{0000-0003-1635-106X}}
\author{Daniel Dilley$^2$\orcidlink{0000-0002-8821-4059}}
\author{Igor Gaidai$^{1,4}$\orcidlink{0000-0002-3950-3356}}
\author{Zain H. Saleem$^2$\orcidlink{0000-0002-8182-2764}}
\author{Rebekah Herrman$^1$\orcidlink{0000-0001-6944-4206}}\thanks{corresponding author}
\email{rherrma2@utk.edu}
\affiliation{$^1$Department of Industrial and Systems Engineering, University of Tennessee Knoxville, USA}
\affiliation{$^2$Mathematics and Computer Science Division, Argonne National Laboratory, Lemont, IL, USA}
\affiliation{$^3$Department of Physics, Faculty of Science, Cairo University, Giza 12613, Egypt}
\affiliation{$^4$Department of Physics and Astronomy, University of Tennessee Chattanooga, USA}

\date{\today}

\begin{abstract}
In this work, we present a new algorithm for generating quantum circuits that efficiently implement continuous time quantum walks on arbitrary simple sparse graphs.  
The algorithm, called matching decomposition, works by decomposing a continuous-time quantum walk Hamiltonian into a collection of exactly implementable Hamiltonians corresponding to matchings in the underlying graph followed by a novel graph compression algorithm that merges edges in the graph. We develop a greedy matching heuristic and a compression-aware matching heuristic, both of which can be used in the quantum circuit algorithm. Lastly, we convert the walks to a circuit and Trotterize over these components.
The dynamics of the walker on each edge in the matching can be implemented in the circuit model as sequences of CX and CRx gates. We do not use Pauli decomposition when implementing walks along each matching. 
Furthermore, we compare greedy (compression-aware) matching decomposition to a standard Pauli-based simulation pipeline and find that greedy (compression-aware) matching decomposition consistently yields substantial resource reductions, requiring up to 43$\%$ (70\%) fewer controlled gates and up to 54$\%$ (75\%) shallower circuits than Pauli decomposition across multiple graph families. 
Finally, we also present examples and theoretical results for when matching decomposition can exactly simulate a continuous-time quantum walk on a graph. 
\end{abstract}

\maketitle

\section{Introduction}\label{sec:intro}
Hamiltonian simulation is a central primitive in quantum computing, and a wide variety of techniques have been developed for it. 
Some Hamiltonian simulation approaches rely on structured access to the Hamiltonian, for example through oracles \cite{low2018hamiltonian, low2017optimal, berry2009black}, compressed quantum states \cite{somma2025shadow}, or block encodings \cite{chakraborty2018power}. 
Other techniques excel on particular kinds of Hamiltonians such as sparse Hamiltonians \cite{berry2015hamiltonian} or Hamiltonians where the underlying problem has some symmetry \cite{qiang2016efficient}. 

A Continuous-Time Quantum Walk (CTQW) is a quantum process whose dynamics can be described by $e^{-i A t}$ where a Hamiltonian $A$ is an adjacency matrix of some unweighted graph and $t$ is the propagation time. Thus, we can equivalently say that CTQW is a process driven by a given graph.

CTQWs are known to form a universal model of quantum computation \cite{childs2009universal, childs2004spatial, herrman2019continuous} and have a wide range of applications, including spatial search \cite{chakraborty2020finding, osada2020continuous, apers2022quadratic, tanaka2022spatial}, link prediction \cite{goldsmith2023link, moutinho2023quantum}, quantum simulation \cite{schreiber20122d}, optimization \cite{marsh2020combinatorial, slate2021quantum}, variational quantum algorithms \cite{matwiejew2024quantum}, quantum state preparation \cite{gonzales2025efficient}, biological processes \cite{d2021protein, santiago2020quantum}, and more \cite{wang2020experimental, wang2022continuous}. 
Although CTQWs can be implemented using waveguides \cite{rai2008transport, perets2008realization} or photonic chips \cite{chapman2016experimental, qiang2016efficient, tang2018experimental}, only CTQWs on specific classes of graphs, such as circulant graphs, stars, and graph products, can be exactly implemented in the circuit model \cite{qiang2016efficient, portugal2022implementation, adhikari10circuit, loke2017efficient, qu2022deterministic, douglas2009efficient}. CTQWs on other classes of graphs can be approximated by decomposing the adjacency matrix (or Laplacian) of the underlying graph into smaller matrices and Trotterizing over the smaller matrices \cite{farhi1998quantum, chakraborty2025continuous}.

CTQWs on dynamic graphs provide a more direct bridge to the standard circuit model \cite{herrman2019continuous}. 
In this framework, computation is performed by evolving under a time-ordered sequence of adjacency matrices, and the resulting dynamics constitute a universal model of quantum computation \cite{herrman2019continuous}.
Subsequent work has extended this framework in several directions, including the addition of isolated vertices \cite{wong2019isolated}, simplification techniques \cite{herrman2022simplifying}, methods to write an arbitrary gate in terms of at most three graphs \cite{adisa2021implementing}, and a detailed characterization of the relationship between CTQWs on dynamic graphs and the multi-angle quantum approximate optimization algorithm (MA-QAOA) \cite{herrman2022relating}, which connects CTQWs on dynamic graphs to the broader context of QAOA-like variational algorithms \cite{farhi2014quantum, kazi2025analyzing, wilkie2024quantum, zhao2025symmetry}.

More recently, CTQWs on dynamic graphs were used to implement a sparse state preparation circuit using $O(nm)$ CX gates, where $n$ is the number of qubits and $m$ is the number of nonzero amplitudes in the target state \cite{gonzales2025efficient}. 
The work relies on converting single edges and self-loops in the dynamic graph framework to sequences of CRx, CP, and CX gates, thus motivating the idea that arbitrary CTQWs can be simulated in the circuit model using similar decomposition approaches. 
The authors show that these sequences can generate any unitary \cite{gonzales2025efficient}. 
Related quantum-walk–based state preparation schemes have also been proposed in other settings, for example, to prepare angular momentum eigenstates \cite{shi2024preparing}. 
Other decomposition-based approaches include decomposing graphs into star subgraphs and performing Trotterization over the stars \cite{Chen_2024} and decomposing the Laplacian of a graph into smaller pieces to use for Trotterization \cite{chakraborty2025continuous}.

Another variation of CTQW is a CTQW on \textit{state space}, where the graph corresponding to the walk is defined directly on a state space. The key difference is that in this formulation we are required to respect a particular embedding of the graph into the state space when simulating the walk, which is important when CTQW is used as an intermediate building block of a quantum circuit. Note that, given an unlabeled graph, one can label the vertices of the graph in order to perform a CTQW on a state space.

In general, implementing an arbitrary CTQW requires exponentially many quantum gates and classical computational resources in the number of qubits. However, a sparse CTQW, that is a CTQW where the number of edges grows polynomially in the number of qubits, can be implemented using fewer resources. 

In this work, we develop an algorithm that generates a quantum circuit for a given sparse CTQW on state space.
Our algorithm is based on decomposition of CTQW edges into a collection of matchings. 
We develop two matching heuristics that can be used in the quantum circuit algorithm: one is a greedy heuristic, and one is a compression-aware heuristic.
A novel graph compression algorithm is used to combine edges, which reduces CX gate overhead in the output circuit. The overall evolution can then be approximated via Trotterization over the set of matchings \cite{Suzuki1976_GenTrottFormAndSystApprox}.  
As with any Trotterization methods, the approximation error can be arbitrarily reduced by increasing the number of Trotter steps.

In our numerical benchmarks we compare against a standard Pauli decomposition Trotterization workflow with identical transpilation settings. We emphasize that Pauli-based Hamiltonian simulation is a mature direction, and a number of specialized Pauli evolution compilation techniques (e.g., cancellation-aware synthesis and dedicated Pauli compilers) can further reduce two-qubit gate counts \cite{liu2025quclear}. Our goal here is to evaluate whether matching decomposition provides a new circuit construction primitive that is competitive against a widely used baseline, and to quantify the practical gains that can be obtained immediately in a standard toolchain.

In Sec.~\ref{sec:background}, we present a formal definition for a CTQW on a dynamic graph sequence and define the basic graph theory terminology used in this work. 
In Sec.~\ref{sec:algorithm}, we describe the greedy matching and compression-aware matching decomposition algorithms. 
In Sec.~\ref{sec:results}, we compare the 2-norm of the operator difference between the exact CTQW time evolution operator and both of the matching decompositions, as well as Pauli decomposition \cite{Pauli} and find that all decompositions have comparable accuracy. We find that the greedy matching decomposition approach requires up to 43\% fewer CX gates and up to 54\% shallower circuits than Pauli decomposition and that the compression-aware matching requires up to 70\% fewer CX gates and up to 75\% shallower circuits than Pauli decomposition when used to approximate CTQWs on select graphs. We also analyze when a given graph has a matching decomposition that pairwise commute.
In Sec.~\ref{sec:discussion}, we discuss the implications of our findings and potential future directions.

\section{Background}\label{sec:background}
In this section, we give a brief introduction to continuous-time quantum walks on dynamic graphs, graph theory terminology, Trotterization, and Pauli decomposition. 

\subsection{Continuous-time quantum walks on dynamic graphs}\label{sec:dynamicctqw}
A \textit{continuous-time quantum walk on a dynamic graph}, $\textsc{G} = \{(G_i, t_i)\}_{i \in [m]}$ is defined by an ordered sequence of graphs $G_i$ and associated propagation times $t_i$. 
A walker traverses each graph for the associated time according to 
\begin{equation*}
    \ket{\psi_f} = e^{-i A_m t_m} \ldots e^{-i A_1 t_1} \ket{\psi_0}
\end{equation*}
where $\ket{\psi_0}$ is the initial state of the walker and $A_i$ is the adjacency matrix of graph $G_i$. 

The authors of \cite{gonzales2025efficient} showed how to map CTQWs on single edge walks and self-loop graphs into the circuit model using CRx, CRz, and CP gates. In this approach, each vertex of the graph is labeled by a bitstring of length $n$. We make extensive use of the single edge conversion. The single-edge conversion will rely on the \textit{Hamming distance} between two labels. The Hamming distance of two bitstrings $u$ and $v$ is the number of positions in which the two bitstrings differ and is denoted $d_H(u,v)$. Given an edge between two standard basis states $z_1$ and $z_2$ that are Hamming distance one apart, the circuit is a multicontrolled Rx gate,
\begin{align}
    \text{Rx}(2t)=\begin{pmatrix}
        \cos(t) &-i\sin(t)\\
        -i\sin(t) &\cos(t)
    \end{pmatrix},
\end{align}
with a target on the differing qubit and controls matching the other qubits. For $z_1$ and $z_2$ greater than Hamming distance one apart, a similar approach can be applied. First, a basis transformation is performed. CX gates are applied such that the control is a fixed qubit where $z_1$ and $z_2$ differ and the targets are the remaining differing qubits. This brings $z_1$ and $z_2$ to Hamming distance one apart. Second, the multicontrolled Rx gate is applied as before.  Finally, the CX gates are applied again to restore the basis.

These circuits were then used to make an efficient sparse deterministic state preparation algorithm that also utilized ideas from classical combinatorics to minimize the controlled gate count. 
The Hamiltonian simulation algorithm in this paper is inspired by a multi-edge version of the CTQW mapping from \cite{gonzales2025efficient}. 

\subsection{Graph theory terminology}\label{sec:graphterminology}
Throughout this work, we will focus on \textit{simple} graphs $G = (V,E)$ that have no self-loops or multi-edges.
A \textit{matching} of $G$ is defined to be a set of edges $M \subseteq E$ such that no two edges in $M$ share a common vertex. 
A graph matching is said to be \textit{maximal} if it is not a subset of any other matching of $G$. 
A matching $M$ of $G$ is said to be \textit{perfect} if $|M| = |V|/2$. 
Graph matchings have applications in fields such as image analysis \cite{bunke1995efficient}, chemical compound analysis \cite{smalter2008gpm}, and computer science \cite{karp1990optimal, ren2021matching}. 
Furthermore, there exist several graph matching algorithms and heuristics that can be implemented in polynomial time \cite{edmonds1965paths, gabow1976implementation, hopcroft1973n, micali1980v, blum1990new, gabow1991faster, gabow1991faster, micali1980v}.
Note that a graph can have several different matchings, so the matching algorithms used in this paper are not deterministic in general. 
Thus, there may be some slight differences in circuits when using the Trotterization methods in this work.

A graph is said to be \textit{bipartite} if its vertices can be partitioned into two disjoint sets such that every edge has an endpoint in both sets, or equivalently, the graph contains no cycles of odd length. Two graphs commonly discussed in later sections are complete graphs on 2 vertices (equivalently single edges), denoted $K_2$, and cycles with four vertices, denoted $C_4$, both of which are examples of bipartite graphs. A \textit{cycle} on $n$ vertices is a connected graph with $n$ vertices where each vertex is incident to exactly two edges. Note that $C_n$ is bipartite if and only if $n$ is even. $K_{n,n}$ is another such bipartite graph mentioned below and is defined as a graph with $2n$ vertices that are partitioned into two sets of size $n$ such that all vertices in one set are connected to all vertices in the other set by edges and no vertices are connected to any vertex in the same set.

\subsection{Trotter Decomposition}
An arbitrary Hamiltonian $A_G$ can be decomposed into a set of matrices $\{A_{i}\}$. Then, by the Trotter-Lie formula~\cite{lapidus1981generalization},
\begin{equation}
    \lim_{N \rightarrow \infty} \left(e^{-i \frac{t}{N}A_{1}} e^{-i \frac{t}{N}A_{2}} \cdots e^{-i \frac{t}{N}A_{m}}\right)^N = e^{-i t A_G}
\end{equation}
with error $O(t^2/N)$ for finite $N$. This approach is commonly used to simulate Hamiltonians on gate-based quantum computers.

\section{Matching decomposition algorithm}
\label{sec:algorithm}

In this section, we describe our approach for implementing continuous-time quantum walks (CTQWs) on sparse graphs embedded into a state space in the circuit model. The method consists of two main steps: (1) decomposing a graph into matchings, and (2) applying iterative graph compression on the matchings to minimize the number of control qubits required for circuit implementation.

\subsection{Greedy matching decomposition}

Since edges in a matching are independent (i.e., no two edges share a common vertex), time evolution operators under each edge in the matching commute, and the total matching evolution $e^{-iA_M t}$ can be decomposed into independent single-edge operations. Since the embedding of the graph is fixed, each vertex is labeled with an $n$-bit string corresponding to a basis state $|z\rangle$. 

Our algorithm decomposes the graph edges into matchings by grouping edges according to their \textit{bit-flip structure}. 
The bit-flip structure of an edge $(u, v)$ is determined by which bit positions differ between the endpoint labels $u$ and $v$. 
We first identify edges with Hamming distance $d_H(u,v) = 1$. 
For such edges, we define the \textit{bit-flip position} as the unique bit index $i$ where $u$ and $v$ differ. 
For example, the edge $(010, 110)$ has bit-flip position 2, since only the third bit (index 2) differs between the endpoints.
Here and further in the text, the bit indices are assigned from right to left and are indexed starting with ``0".

The key insight is that walks along edges sharing the same bit-flip position $i$ all correspond to controlled Rx rotations with target qubit $i$ and different sets of control bits. 
This makes them natural candidates for grouping into the same matching, as the resulting circuit can potentially be compressed using the graph compression technique described in Section~\ref{sec:space_reduction}.

The greedy matching algorithm (Algorithm~\ref{alg:greedy-matching}) partitions edges into matchings as follows: edges with Hamming distance 1 are grouped by their bit-flip position. Crucially, edges with the same bit-flip position cannot share a vertex -- if two edges $(u_1, v_1)$ and $(u_2, v_2)$ both flip bit $i$, then sharing a vertex would imply $u_1 = u_2$ (or some permutation), which forces $v_1 = v_2$ since both flip only bit $i$. Thus, each bit-flip group is already a valid matching. 
 Edges with Hamming distance greater than one are greedily added to existing matchings if they share no vertices with any edge already in an existing matching in order to reduce the total number of matchings, but they benefit from graph compression in a different way.  See Sec.~\ref{sec:example} for an example of the matching algorithm. 

\begin{algorithm}
\small
\caption{Greedy Matching Decomposition}
\label{alg:greedy-matching}
\SetAlgoLined
\KwIn{Edge set $E$ with $n$-bit vertex labels}
\KwOut{List of matchings $\mathcal{M} = \{M_1, M_2, \ldots\}$}

\tcp{Group edges by bit-flip position}
\For{each edge $(u,v) \in E$}{
    \eIf{$d_H(u,v) = 1$}{
        $i \leftarrow$ bit position where $u$ and $v$ differ\;
        Add $(u,v)$ to group $G_i$\;
    }{
        Add $(u,v)$ to $E_{\text{multi}}$ \tcp*{Hamming dist $> 1$}
    }
}
$\mathcal{M} \leftarrow \emptyset$\;

\tcp{Create matchings from each bit-flip group}
\For{each bit position $i$ with $G_i \neq \emptyset$}{
    $\mathcal{M} \leftarrow \mathcal{M} \cup \{G_i\}$\;
}

\For{each edge $e \in E_{\text{multi}}$}{
    Place $e$ in first $M \in \mathcal{M}$ with no vertex conflict, or create new $M$\;
}
\Return{$\mathcal{M}$}\;
\end{algorithm}

\subsection{Iterative graph compression}
\label{sec:space_reduction}

After decomposing the graph into matchings, we apply an iterative graph compression
technique to each matching in the decomposition. The graph compression combines edges of the graph, which consequently reduces the number of edges and both the number of multicontrolled Rx gates and controls in the multicontrolled Rx gates of the output circuit. Recall
from Section~\ref{sec:dynamicctqw} that each edge $(z_1, z_2)$ in a graph is
implemented as a multicontrolled Rx gate: the target qubit is one of the bit positions
where $z_1$ and $z_2$ differ, and the control qubits are determined by the common bits of $z_1$ and $z_2$. When multiple bit positions differ (i.e., Hamming distance greater than one), we select any one of the differing positions as the target qubit and apply controlled-NOT gates to the other differing positions, as described in Section~\ref{sec:dynamicctqw}. The goal of graph compression is to merge multiple edges into a single edge in a reduced qubit space, thereby reducing the number of control qubits needed.

Let $M = \{e_1, \ldots, e_m\}$ be a matching where each edge $e_i = (u_i, v_i)$ connects vertices labeled by $n$-bit strings $u_i, v_i \in \{0,1\}^n$. We refer to the graph at iteration zero as $G_0$.

To describe the compression algorithm precisely, we introduce several definitions. First, we need a way to identify which bit positions differ between the endpoints of an edge, as this determines the target qubit for the corresponding Rx gate.

\begin{definition}[XOR Mask]
    For an edge $(u, v)$ with $n$-bit vertex labels, the \textit{XOR mask} is $\mu = u \oplus v \in \{0,1\}^n$, where bit $k$ of $\mu$ is 1 if and only if $u$ and $v$ differ at position $k$. We distinguish between the \textit{original XOR mask} (computed from the original $n$-bit labels, which never changes during compression) and the \textit{current XOR mask} (computed from the current compressed labels).
\end{definition}

As edges are merged, the bit-string labels become shorter. We need to track which original qubit indices remain in the compressed representation.

\begin{definition}[Active Qubits]
    The \textit{active qubits} $\mathcal{A} = (a_0, a_1, \ldots, a_{m-1})$ is an ordered list of original qubit indices that remain after edge merging. Initially, $\mathcal{A} = (0, 1, \ldots, n-1)$. When two edges are merged by removing the bit at position $p$ in the current compressed labels, the corresponding original qubit index $a_p$ is removed from $\mathcal{A}$. The compressed edge labels are $m$-bit strings, where $m = |\mathcal{A}|$, and bit position $p$ in the compressed label corresponds to original qubit $a_p$.
\end{definition}

When a qubit is removed during compression and that qubit corresponded to a differing bit in the original edge (i.e., the original XOR mask has a 1 at that position), we must track this for circuit construction. These qubits require additional controlled-NOT gates to implement the basis transformation described in Section~\ref{sec:dynamicctqw}.

\begin{definition}[Weight-Reducing Qubits]
    The \textit{weight-reducing qubits} $\mathcal{W}_{e}$ is an ordered list of original qubit indices that were removed during compression and correspond to differing bits in the original edge. Initially, $\mathcal{W}_e$ is empty. When two edges are merged by removing the bit at position $p$, if the original qubit index $a_p$ corresponds to a 1 in the original XOR mask $\mu_e$ (i.e., bit $a_p$ is set in $\mu_e$), then $a_p$ is appended to $\mathcal{W}_{e'}$ of the merged edge. These qubits are called ``weight-reducing'' because removing them reduces the Hamming weight of the current XOR mask relative to the original.
\end{definition}

With these concepts in place, we can now define when two edges can be merged into a single edge in a reduced qubit space.

\begin{definition}[Mergeable edges]
Two edges $e_1 = (u_1, v_1)$ and $e_2 = (u_2, v_2)$ are \textit{mergeable at position $p$} (where $p$ is a position in the current compressed labels) if:
\begin{enumerate}
    \item The edges have equal original XOR masks ($\mu_{e_1} = \mu_{e_2}$),
    \item The active qubit lists are equal ($\mathcal{A}_{e_1} = \mathcal{A}_{e_2}$),
    \item The weight-reducing qubit lists are equal ($\mathcal{W}_{e_1} = \mathcal{W}_{e_2}$), and
    \item The endpoints differ only at position $p$: either $(u_1 \oplus u_2 = v_1 \oplus v_2 = 2^p)$ or $(u_1 \oplus v_2 = v_1 \oplus u_2 = 2^p)$.
\end{enumerate}
Note that conditions 1--3 together imply the current XOR masks are also equal. The last condition ensures that after deleting bit $p$, both edges collapse to the same edge in the reduced space.
\end{definition}

The result of merging two edges is a compressed edge with fewer bits in its vertex labels.

\begin{definition}[Compressed Edge]
A \textit{compressed edge} $e' = (u', v') \in \{0,1\}^m \times \{0,1\}^m$ is an edge with $m$-bit vertex labels, where $m \leq n$. Each compressed edge has associated metadata:
\begin{itemize}
    \item $\mathcal{A}_{e'} = (a_0, \ldots, a_{m-1})$: the ordered list of active original qubit indices, where bit position $p$ in $e'$ corresponds to original qubit $a_p$,
    \item $\mathcal{W}_{e'}$: the list of weight-reducing original qubit indices (a subset of the deleted qubits), and
    \item $\mu_{e'} \in \{0,1\}^n$: the original XOR mask.
\end{itemize}
A compressed edge represents $2^{n-m}$ original edges: expanding $e'$ by inserting bits at the deleted qubit positions with all $2^{n-m}$ combinations yields the original edge set.
\end{definition}



Finally, we define the operation that removes a bit from vertex labels when merging two edges.

\begin{definition}[Bit Deletion Operator]
The \textit{bit deletion operator} $\pi_k: \{0,1\}^n \to \{0,1\}^{n-1}$ removes bit $k$ from an $n$-bit string:
\begin{equation}
\pi_k(b_{n-1} \cdots b_k \cdots b_0) = b_{n-1} \cdots b_{k+1} b_{k-1} \cdots b_0.
\end{equation}
We extend $\pi_k$ to edges by $\pi_k((u, v)) = (\pi_k(u), \pi_k(v))$.
\end{definition}

The merge operation replaces $e_1, e_2$ with a single edge $e' = (\pi_p(u_1), \pi_p(v_1))$ in the reduced space, where $p$ is the merge position in the current compressed labels.

The graph compression algorithm (Algorithm~\ref{alg:space-reduction}) works as follows. Starting with the original matching $M$ where all edges have $n$-bit labels, we initialize $M' \leftarrow M$. Each edge $e \in M'$ maintains its own active qubit list $\mathcal{A}_e$ (initially $(0, 1, \ldots, n-1)$), weight-reducing list $\mathcal{W}_e$ (initially empty), and original XOR mask $\mu_e$. The algorithm repeatedly searches for pairs of edges that can be merged according to Definition 4. When edges $e_1$ and $e_2$ are merged at position $p$, they are replaced by a single edge $e' = (\pi_p(u_1), \pi_p(v_1))$. The new active qubit list $\mathcal{A}_{e'}$ is obtained by removing the $p$-th element from $\mathcal{A}_{e_1}$. If the removed original qubit index corresponds to a bit set in $\mu_{e_1}$ (i.e., it was a differing bit in the original edge), then that index is appended to $\mathcal{W}_{e'}$.

The algorithm terminates when no more mergeable pairs exist. The output is a set of compressed edges, where each compressed edge $e'$ has its own $\mathcal{A}_{e'}$ and $\mathcal{W}_{e'}$. Edges that were never merged retain their original $n$-bit labels. See Sec.~\ref{sec:example} for an example of the graph compression algorithm. 

\begin{algorithm}
\small
\caption{Iterative Graph Compression}
\label{alg:space-reduction}
\SetAlgoLined
\KwIn{Matching $M \subseteq \{0,1\}^n \times \{0,1\}^n$}
\KwOut{Set of tuples $(e', \mathcal{A}_{e'}, \mathcal{W}_{e'}, \mu_{e'})$ for each edge}

\For{each edge $e = (u, v) \in M$}{
    $\mathcal{A}_e \leftarrow (0, 1, \ldots, n-1)$\tcp*{ordered list}
    $\mathcal{W}_e \leftarrow ()$\tcp*{empty list}
    $\mu_e \leftarrow u \oplus v$\tcp*{original XOR mask}
}
$M' \leftarrow M$\;

\Repeat{no merge performed}{
    \For{each pair $e_1, e_2 \in M'$ with $\mu_{e_1} = \mu_{e_2}$, $\mathcal{A}_{e_1} = \mathcal{A}_{e_2}$, $\mathcal{W}_{e_1} = \mathcal{W}_{e_2}$}{
        \For{$p = 0$ \KwTo $|\mathcal{A}_{e_1}| - 1$}{
            \If{$(u_1 \oplus u_2) = (v_1 \oplus v_2) = 2^p$ \textbf{or} $(u_1 \oplus v_2) = (v_1 \oplus u_2) = 2^p$}{
                $e' \leftarrow (\pi_p(u_1), \pi_p(v_1))$\;
                $\mathcal{A}_{e'} \leftarrow \mathcal{A}_{e_1}$ with element at position $p$ removed\;
                $\mathcal{W}_{e'} \leftarrow \mathcal{W}_{e_1}$\;
                $k \leftarrow \mathcal{A}_{e_1}[p]$\tcp*{original qubit index}
                \If{bit $k$ is set in $\mu_{e_1}$}{append $k$ to $\mathcal{W}_{e'}$}
                $\mu_{e'} \leftarrow \mu_{e_1}$\;
                $M' \leftarrow (M' \setminus \{e_1, e_2\}) \cup \{e'\}$\;
                \textbf{break}\;
            }
        }
    }
}
\Return{$\{(e', \mathcal{A}_{e'}, \mathcal{W}_{e'}, \mu_{e'}) : e' \in M'\}$}\;
\end{algorithm}

\subsection{Compression-aware matching decomposition}
Instead of taking a na\"ive greedy matching approach, note that one can develop a matching heuristic based on the XOR masks that were introduced in the previous subsection. The compression-aware heuristic treats all edges uniformly by grouping edges according to their full XOR mask $\mu = u \oplus v$, where $u$ and $v$ are the $n$-bit vertex labels of an edge, regardless of Hamming distance. 
Edges that share the same XOR mask differ in exactly the same bit positions, making them natural candidates for merging during the iterative graph compression step. The algorithm places each edge into a matching using a three-tier priority: (1)~prefer a matching that already contains an edge with the same XOR mask and has no vertex conflict, maximizing compression opportunities; (2)~fall back to any conflict-free matching; (3)~create a new matching as a last resort. To avoid sensitivity to a single edge ordering, the algorithm runs $T$ independent trials with different group processing
orders-largest-first, smallest-first, and random shuffles-each seeded from a provided seed list. After all trials, the matching that results in the quantum circuit with the lowest estimated CX cost (computed via a fast compression-based estimator) is returned. The pseudocode for the compression aware heuristic can be found in Alg.~\ref{alg:compression-aware}. Figure~\ref{fig:matching_comparison} illustrates how the greedy and compression-aware heuristics differ.
\begin{algorithm}
\small
\caption{Compression-Aware Matching Decomposition}
\label{alg:compression-aware}
\SetAlgoLined
\KwIn{Edge set $E$ with $n$-bit vertex labels, list of seeds $S = [s_1, s_2, \ldots, s_T]$}
\KwOut{List of matchings $\mathcal{M}^*$ minimizing estimated CX cost}

$\mathcal{M}^* \leftarrow \texttt{null}$; \quad $C^* \leftarrow \infty$\;

\For{$t \leftarrow 1$ \KwTo $T$}{
    Initialize random generator with seed $s_t$\;

    \tcp{Group \emph{all} edges by XOR mask}
    \For{each edge $(u,v) \in E$}{
        $\mu \leftarrow u \oplus v$ \tcp*{XOR mask (any Hamming dist)}
        Add $(u,v)$ to group $X_\mu$\;
    }

    \tcp{Vary group processing order across trials}
    \lIf{$t = 0$}{sort groups largest-first}
    \lElseIf{$t = 1$}{sort groups smallest-first}
    \lElse{randomly shuffle groups}

    $\mathcal{M} \leftarrow \emptyset$\;
    \For{each group $({\mu}, X_{\mu})$ in processing order}{
        Randomly shuffle edges within $X_{\mu}$\;
        \For{each edge $e \in X_{\mu}$}{
            \tcp{Priority 1: same XOR mask, no conflict}
            \lIf{$\exists\, M_j \in \mathcal{M}$: mask $\mu \in M_j$ \textbf{and} no vertex conflict}{add $e$ to $M_j$}
            \tcp{Priority 2: any matching, no conflict}
            \lElseIf{$\exists\, M_j \in \mathcal{M}$: no vertex conflict}{add $e$ to $M_j$}
            \tcp{Priority 3: create new matching}
            \lElse{$\mathcal{M} \leftarrow \mathcal{M} \cup \{\{e\}\}$}
        }
    }

    $C \leftarrow \textsc{EstimateCX}(\mathcal{M})$\;
    \lIf{$C < C^*$}{$\mathcal{M}^* \leftarrow \mathcal{M}$; \quad $C^* \leftarrow C$}
}
\Return{$\mathcal{M}^*$}\;
\end{algorithm}

\begin{figure*}
\centering
\includegraphics[width=\textwidth]
{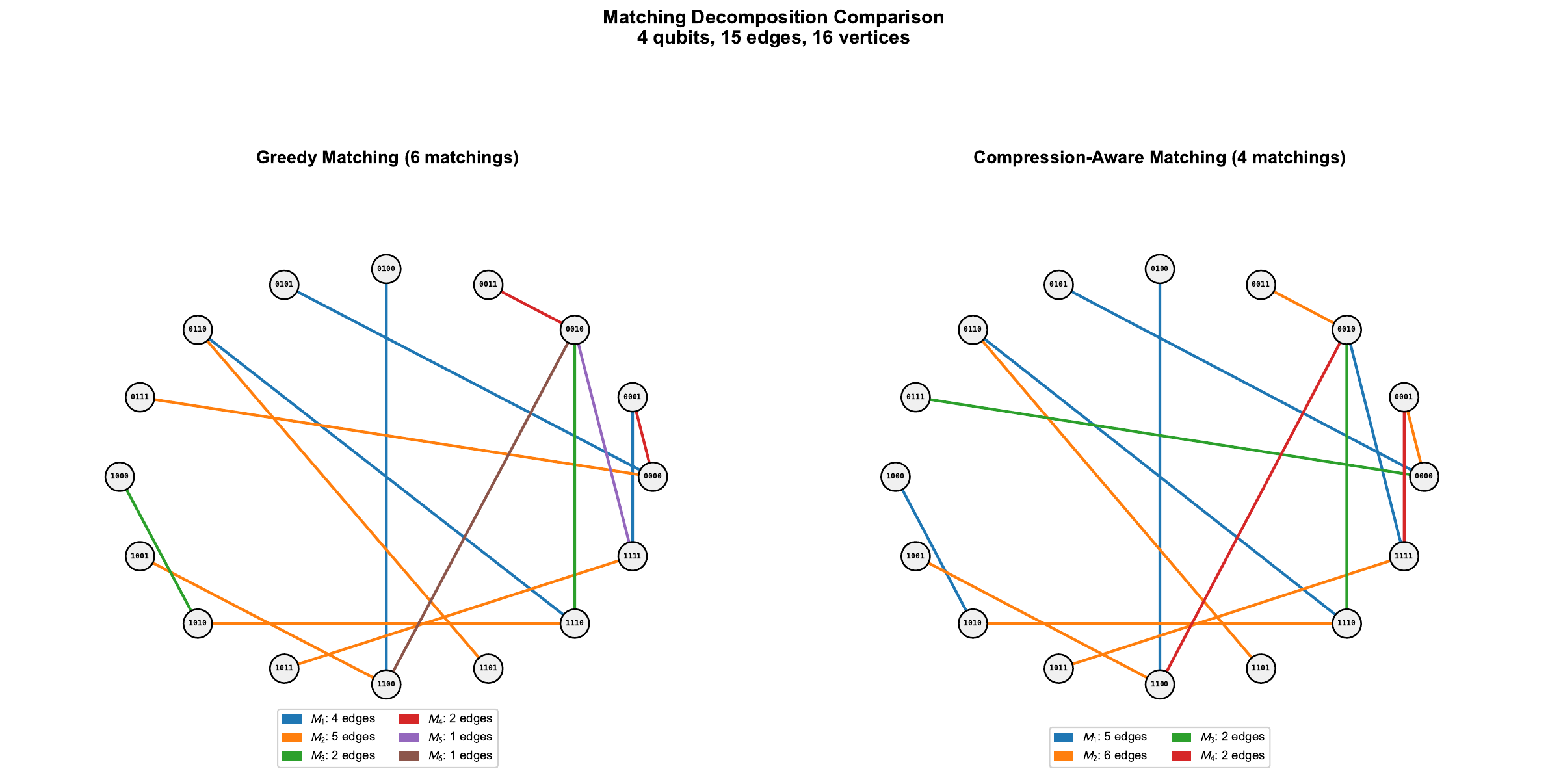}
\caption{Matching decomposition of a connected 16-vertex graph (4~qubits,
  15~edges). \textbf{Left:} Greedy heuristic (6~matchings). \textbf{Right:}
  Compression-aware heuristic (4~matchings). Edge colors indicate matching
  assignment. The compression-aware algorithm consolidates edges more
  efficiently by using XOR-mask-based grouping with a multi-trial search.}
\label{fig:matching_comparison}
\end{figure*}

\subsection{Circuit construction and Trotterization}

For circuit construction, each compressed edge $e' = (u', v')$ with metadata $(\mathcal{A}_{e'}, \mathcal{W}_{e'})$ is implemented as a three-stage circuit. Let $q_{\text{target}}$ denote the target qubit of the multicontrolled Rx gate, which is one of the bit positions where $u'$ and $v'$ differ in the compressed representation; the corresponding original qubit index is $\mathcal{A}_{e'}[q_{\text{target}}]$.

The first stage applies CX gates to transform the computational basis so that the original edge (which may have Hamming distance greater than one) becomes Hamming distance one in the transformed basis. Specifically, for each original qubit index $w \in \mathcal{W}_{e'}$, a CX gate is applied with control on original qubit $\mathcal{A}_{e'}[q_{\text{target}}]$ and target on original qubit $w$.

The second stage applies the multicontrolled $\text{Rx}(2t)$ gate that implements the edge evolution. The target of this gate is original qubit $\mathcal{A}_{e'}[q_{\text{target}}]$, corresponding to a differing bit in the compressed edge. The control qubits are all other qubits in $\mathcal{A}_{e'}$, namely $\{\mathcal{A}_{e'}[p] : p \neq q_{\text{target}}\}$, with control values determined by the common bits of $u'$ and $v'$. For fully compressed edges where the compressed labels are single-qubit strings, this reduces to a simple $\text{Rx}(2t)$ gate with no controls.

The third stage reverses the basis transformation by applying the same CX gates as in the first stage. Edges that are not compressed are implemented directly using the single-edge circuit construction from \cite{gonzales2025efficient}, which follows the same pattern: X gates to transform to the all-ones basis, CX gates for Hamming distance reduction, a multicontrolled Rx gate, and the inverse transformations.

After constructing circuits for all matchings (that may or may not contain compressed edges), we approximate the full CTQW evolution via Trotterization. For matchings $\{M_j\}_{j=1}^m$ decomposing graph $G$, we have $A_G = \sum_{j=1}^m A_{M_j}$. 

\subsection{Example}
\label{sec:example}

Let us implement the greedy matching decomposition algorithm on the graph in Fig.~\ref{fig:ctqw-reduction}. The matching algorithm first identifies that edges $00$-$01$ and $10$-$11$ have bit-flip position $0$ and no edges have bit-flip position $1$. Thus, $00$-$01$ and $10$-$11$ are added to matching $M_0$. The remaining two edges $00$-$11$ and $01$-$10$ cannot be added to $M_0$ since they share vertices with edges already contained in $M_0$. Thus we place them into a second matching called $M_1$.

After assigning each edge to a matching, we apply the iterative graph compression techniques to each matching. The edges of $M_0$ differ only in qubit~0, compressing to a single edge $(0,1)$ with $\mathcal{A} = \{0\}$ and $\mathcal{W} = \{\}$. For $M_1$, both edges $(00,11)$ and $(01,10)$ have Hamming distance 2, differing in both qubits. Iterative compression eliminates qubit~0 (which varies across the matching), yielding a compressed edge $(0,1)$ with $\mathcal{A} = \{1\}$ and $\mathcal{W} = \{0\}$. The non-empty $\mathcal{W}$ reflects that qubit~0 contributed to the Hamming distance within each original edge.

Finally, we construct the circuits for each compressed matching. The compressed edge for $M_0$ requires only a single $\text{Rx}(2t)$ gate on qubit~0. For $M_1$, the non-empty weight-reducing set $\mathcal{W} = \{0\}$ requires additional CX gates: we first apply a CX gate with control on qubit~1 (the target of the $\text{Rx}$ gate) and target on qubit~0, then apply $\text{Rx}(2t)$ on qubit~1, and finally reverse the CX gate. The complete circuit for one Trotter step is shown in Fig.~\ref{fig:example-circuit}; multiple Trotter steps are implemented by repeating this circuit sequence with appropriately scaled rotation angles.

\begin{figure*}
\centering
\begin{tikzpicture}[scale=1.0]
    \begin{scope}[shift={(0,-2.0)}]
        \node[font=\bfseries] at (1.1,3.2) {Original Graph $G$};
        \node[circle, draw, fill=blue!25, minimum size=0.7cm, font=\small] (00) at (0,1.8) {$00$};
        \node[circle, draw, fill=blue!25, minimum size=0.7cm, font=\small] (01) at (2.2,1.8) {$01$};
        \node[circle, draw, fill=blue!25, minimum size=0.7cm, font=\small] (10) at (0,0) {$10$};
        \node[circle, draw, fill=blue!25, minimum size=0.7cm, font=\small] (11) at (2.2,0) {$11$};
        \draw[thick, blue] (00) -- (11);   
        \draw[thick, blue] (01) -- (10);   
        \draw[thick, red] (00) -- (01);    
        \draw[thick, red] (10) -- (11);    
    \end{scope}

    \draw[->, thick, line width=1.5pt] (3.0,-1.1) -- (4.0,-1.1);
    \node[font=\small\bfseries] at (3.5,-0.6) {Decompose};

    \begin{scope}[shift={(5.0,0)}]
        \node[font=\bfseries, blue] at (1.1,3.2) {$M_1$};
        \node[circle, draw, fill=blue!25, minimum size=0.7cm, font=\small] (00) at (0,1.8) {$00$};
        \node[circle, draw, fill=blue!25, minimum size=0.7cm, font=\small] (01) at (2.2,1.8) {$01$};
        \node[circle, draw, fill=blue!25, minimum size=0.7cm, font=\small] (10) at (0,0) {$10$};
        \node[circle, draw, fill=blue!25, minimum size=0.7cm, font=\small] (11) at (2.2,0) {$11$};
        \draw[thick, blue] (00) -- (11);
        \draw[thick, blue] (01) -- (10);
    \end{scope}

    \draw[->, thick, line width=1.5pt] (8.0,0.9) -- (9.0,0.9);
    \node[font=\small\bfseries] at (8.5,1.4) {Compress};

    \begin{scope}[shift={(9.8,0)}]
        \node[font=\bfseries, blue] at (1.1,1.8) {Compressed $M_1$};
        \node[font=\small] at (1.1,1.3) {$\mathcal{A} = \{1\}$, $\mathcal{W} = \{0\}$};
        \node[circle, draw, fill=blue!25, minimum size=0.7cm, font=\small] (0) at (0.2,0.5) {$0$};
        \node[circle, draw, fill=blue!25, minimum size=0.7cm, font=\small] (1) at (2.0,0.5) {$1$};
        \draw[thick, blue] (0) -- (1);
    \end{scope}

    \begin{scope}[shift={(5.0,-4.2)}]
        \node[font=\bfseries, red] at (1.1,3.2) {$M_0$};
        \node[circle, draw, fill=red!25, minimum size=0.7cm, font=\small] (00) at (0,1.8) {$00$};
        \node[circle, draw, fill=red!25, minimum size=0.7cm, font=\small] (01) at (2.2,1.8) {$01$};
        \node[circle, draw, fill=red!25, minimum size=0.7cm, font=\small] (10) at (0,0) {$10$};
        \node[circle, draw, fill=red!25, minimum size=0.7cm, font=\small] (11) at (2.2,0) {$11$};
        \draw[thick, red] (00) -- (01);
        \draw[thick, red] (10) -- (11);
    \end{scope}

    \draw[->, thick, line width=1.5pt] (8.0,-3.3) -- (9.0,-3.3);
    \node[font=\small\bfseries] at (8.5,-2.8) {Compress};

    \begin{scope}[shift={(9.8,-4.2)}]
        \node[font=\bfseries, red] at (1.1,1.8) {Compressed $M_0$};
        \node[font=\small] at (1.1,1.3) {$\mathcal{A} = \{0\}$, $\mathcal{W} = \{\}$};
        \node[circle, draw, fill=red!25, minimum size=0.7cm, font=\small] (0) at (0.2,0.5) {$0$};
        \node[circle, draw, fill=red!25, minimum size=0.7cm, font=\small] (1) at (2.0,0.5) {$1$};
        \draw[thick, red] (0) -- (1);
    \end{scope}
\end{tikzpicture}
\caption{Matching decomposition and space reduction example. The original 4-vertex graph $G$ with edges $\{(00, 01), (10, 11), (00, 11), (01, 10)\}$ decomposes into two matchings. $M_0$ (red) contains Hamming distance 1 edges $\{(00, 01), (10, 11)\}$ which differ only in qubit~0, compressing to a single edge $(0,1)$ with active qubits $\mathcal{A} = \{0\}$ and weight-reducing qubits $\mathcal{W} = \{\}$. $M_1$ (blue) contains Hamming distance 2 edges $\{(00, 11), (01, 10)\}$, compressing to edge $(0,1)$ with $\mathcal{A} = \{1\}$ and $\mathcal{W} = \{0\}$. The non-empty $\mathcal{W}$ for $M_1$ indicates that CX gates are required for the basis transformation.}
\label{fig:ctqw-reduction}
\end{figure*}

\begin{figure}
\centering
$\left(\rule{0pt}{3em}\right.$%
\begin{quantikz}[row sep=0.4cm, column sep=0.4cm]
 \lstick{$q_0$} & \gate{R_x\!\left(\frac{2t}{N}\right)} & \slice{$M_0$} & \targ{} & \qw & \targ{} & \slice{$M_1$} & \qw \\
 \lstick{$q_1$} & \qw & & \ctrl{-1} & \gate{R_x\!\left(\frac{2t}{N}\right)} & \ctrl{-1} & & \qw
\end{quantikz}
$\left.\rule{0pt}{3em}\right)^{\!N}$
\caption{Trotterized circuit for the example graph with $N$ Trotter steps. The $R_x(2t/N)$ gate on $q_0$ implements $M_0$. The CX-$R_x(2t/N)$-CX sequence implements $M_1$, where the CX gates perform the basis transformation required by $\mathcal{W} = \{0\}$. The entire sequence is repeated $N$ times.}
\label{fig:example-circuit}
\end{figure}

\section{Results}
\label{sec:results} 
In this section, we first calculate the operator norm difference between CTQW unitaries and the unitary resulting from Trotterization using our matching decomposition algorithm on a test set of graphs and compare it to the operator norm difference between the CTQW unitary and Trotterization using Pauli decomposition to show that both methods converge at similar rates. 
We perform Pauli decomposition by iterating through all $4^n$ Pauli strings and calculating the coefficient $c_P = \text{Tr}(P^\dagger A) / 2^n$ for each Pauli operator $P$ and the graph adjacency matrix $A$. Note that this is a simple reference implementation and the benchmarks are the properties of the compiled circuit and not coefficient finding. Non-zero terms are collected into Qiskit's \texttt{SparsePauliOp} \cite{Pauli}, and the circuit is constructed using Qiskit's \texttt{PauliEvolutionGate} \cite{PauliEvolutionGate} for each Trotter step. Code for our matching algorithm and Pauli decomposition can be found at the linked repository below. 
We then calculate the CX gate count and circuit depth that matching decomposition requires to simulate CTQWs on test sets of graphs with between 8 and 128 vertices and compare it to the CX gate count and circuit depth that Pauli decomposition requires to simulate the same graphs. 
Finally, we analytically analyze graphs that have a commuting matching decomposition. Note that this does not imply that our matching decomposition algorithm will find this matching.

We use first-order Trotterization throughout this work for simplicity.
All circuits are constructed using Qiskit 1.2.2, full connectivity, basis of CX and $\text{U}_3$ (an arbitrary single qubit unitary), and optimized at level 3.

\subsection{Operator difference 2-norm}\label{sec:operatornorm}
We study the error induced by using both greedy and compression-aware matching decomposition as a basis for Trotterization, $\|e^{-iAt} - U_{\text{matching}}\|_2$, and compare it to the error induced by using Pauli decomposition in Trotterization by computing $\|e^{-iAt} - U_{\text{Pauli}}\|_2$, where $A$ is the graph adjacency matrix and $U_{\text{matching}}$ ($U_{\text{Pauli}}$) is the operator resulting from using Trotterization with matching (Pauli) decomposition. Figure~\ref{fig:norm_convergence_8v} shows the average 2-norm operator difference for a set of 80 randomly generated sparse connected 8-vertex graphs with mean values over all graphs and shaded $\pm 1$ standard deviation regions. Convergence plots for 200 randomly generated connected graphs on both 16- and 32-vertices, 74 randomly generated disconnected 8-vertex graphs, and 200 randomly generated disconnected graphs on both 16- and 32-vertices can be found in Appendix~\ref{sec:normplots}.

At $t=1.0$ and $N=100$, connected graphs achieve mean errors of $9.1$--$9.5 \times 10^{-3}$ with $\sim$15\% relative standard deviation across graph sizes. Disconnected graphs show lower mean errors ($6.1$--$6.7 \times 10^{-3}$) due to fewer edges and thus fewer non-commuting terms, but higher relative variance ($\sim$30\%). Some disconnected graphs achieve machine-precision errors when all matchings commute. The matching and Pauli decompositions achieve nearly identical accuracy across all cases, confirming that matching decomposition provides a valid basis for Trotterization.

\begin{figure}
\centering
\includegraphics[width=.48\textwidth]{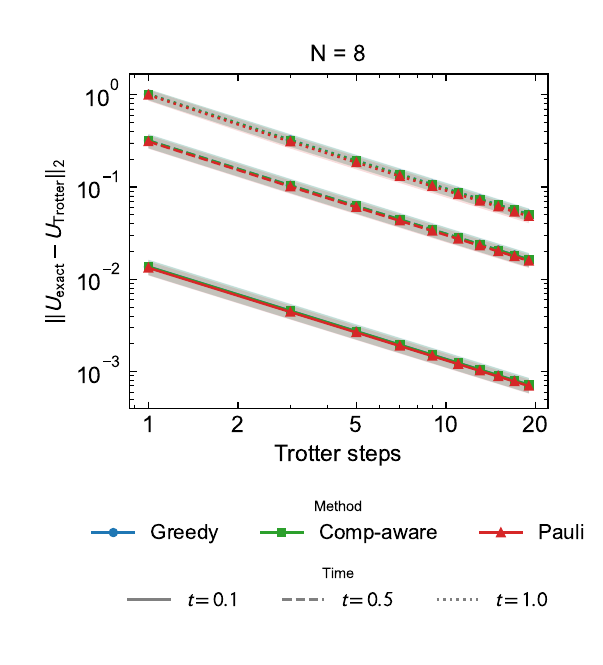}
\caption{Operator 2-norm difference vs.\ number of Trotter steps for
  connected graphs on 8~vertices (3 qubits). Three methods (greedy,
  compression-aware, Pauli) with $\pm 1\sigma$ bands across 5 runs of
  80~graphs each, at three evolution times $t = 0.1, 0.5, 1.0$
  (distinguished by line style).}
\label{fig:norm_convergence_8v}
\end{figure}

\subsection{CX gate count and circuit depth}\label{sec:gatecount}

Next, we compare the gate counts for one iteration of Trotterization using the
matching decomposition and Pauli decomposition on two datasets, implemented using Qiskit's
transpiler~\cite{qiskit_transpiler} at optimization level 3 to optimize the resulting circuits. Circuit depth is the number of layers of gates executed in parallel. The first dataset consists of connected graphs that are generated as Hamiltonian paths on $N=2^n$ vertices for $n \in \{3,4,5,6,7\}$ that visit vertices labeled in numerical order $0\ldots0 \to 0\ldots01 \to 0\ldots 010 \to \cdots \to 1\ldots1$ or slight modifications of Hamiltonian paths. See the repository linked at the end of the paper for code that generates these graphs and details on the modifications to the Hamiltonian paths. 
These graphs have $O(N)$ edges and have a characteristic Hamming weight distribution: for all $i \leq n$, approximately $\frac{N}{2^i} = \frac{2^n}{2^i}$ edges lie in $H_i$ where $H_k = \{uv \in E : d_H(u,v) = k\}$. The graphs have large diameter, low maximum degree, low edge density, and are frequently bipartite (19--60\% of graphs, with smaller graphs more frequently bipartite). Properties of graphs in this dataset are shown in Table~\ref{tab:graph_properties}.

\begin{table*}[htbp]
\centering
\caption{Properties of the connected graph datasets (mean $\pm$ std, averaged over all graphs in each dataset). Bipartite \% shows the fraction of graphs that are bipartite. Connected graphs are single-component counting paths with $\sim N$ edges.}
\label{tab:graph_properties}
\begin{tabular}{lccccc}
\toprule
\textbf{Vertices} & \textbf{Edges} & \textbf{Avg Degree} & \textbf{Max Degree} & \textbf{Density} & \textbf{Bipartite \%} \\
\midrule
\multicolumn{6}{c}{\textit{Connected datasets}} \\
\midrule
8 & $8.8 \pm 0.4$  & $2.2 \pm 0.1$ & $3.3 \pm 0.5$ & $0.32 \pm 0.01$ & 22\% \\
16 & $16.7 \pm 0.5$  & $2.1 \pm 0.1$ & $3.1 \pm 0.4$ & $0.14$ & 32\% \\
32 & $32.8 \pm 0.4$  & $2.1 \pm 0.03$ & $3.1 \pm 0.3$ & $0.07$ & 26\% \\
64 & $65.2 \pm 0.6$  & $2.0 \pm 0.02$ & $3.1 \pm 0.3$ & $0.03$ & 24\% \\
128 & $129.6 \pm 0.7$  & $2.0 \pm 0.01$ & $3.1 \pm 0.3$ & $0.02$ & 18\% \\
\bottomrule
\end{tabular}
\end{table*}

\begin{figure}[htbp]
\centering
\begin{subfigure}[t]{0.48\textwidth}
  \includegraphics[width=\textwidth]{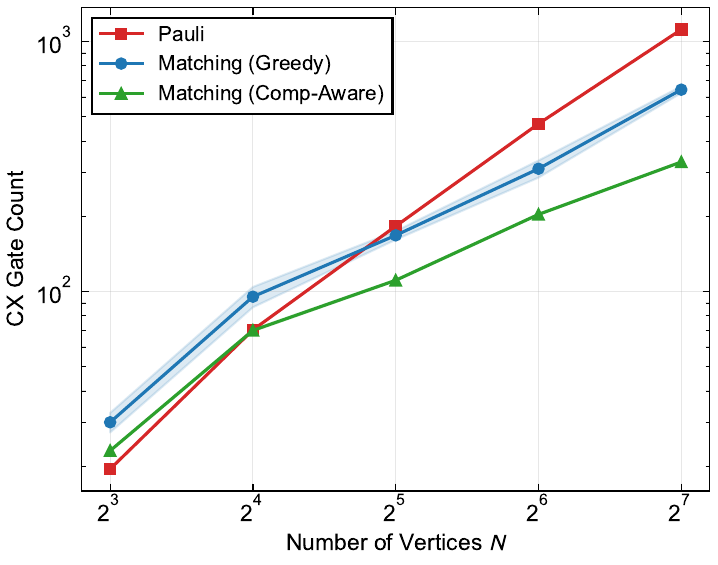}
  \caption{CX count vs.\ vertices.}
\end{subfigure}\hfill
\begin{subfigure}[t]{0.48\textwidth}
  \includegraphics[width=\textwidth]{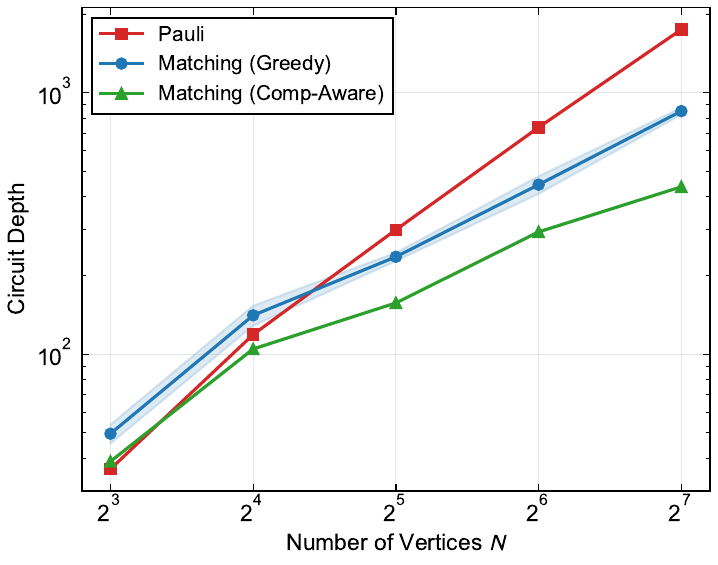}
  \caption{Circuit depth vs.\ vertices.}
\end{subfigure}
\caption{Three-way scaling comparison for connected graphs. Blue: Greedy matching,
  Green: Compression-aware matching, Orange: Pauli. Shaded regions show $\pm 1\sigma$
  across 5 runs.}
\label{fig:connected_scaling}
\end{figure}

The average CX gate counts and circuit depths are shown in Figure~\ref{fig:connected_scaling}. 
For each value of $N$, we generate 200 graphs that have the above properties and run the matching decomposition and Pauli decomposition algorithms 5 times as both the matching heuristic and Qiskit optimizer are nondeterministic. On smaller datasets (8 vertices) both greedy and compression-aware matching decomposition on average require more CX gates than Pauli decomposition.
However, the crossover occurs at 16 vertices for compression-aware matching and 32 vertices for greedy matching. Compression-aware matching requires  matching decomposition requires .5\%, 39\%, 56\%, and 70\% fewer CX gates on average, for 16-, 32-, 64-, and 128-vertex graphs.  Greedy matching requires  matching decomposition requires 8\%, 34\%, and 43\% fewer CX gates on average, for 32-, 64-, and 128-vertex graphs. Circuit depth follows a similar pattern: compression-aware matching requires circuits that are 12\%, 47\%, 60\%, and 75\% shallower than Pauli decomposition on 16-, 32-, 64-, and 128-vertex graphs while greedy matching decomposition produces 21\% shallower circuits at 32 vertices, 40\% shallower circuits at 64 vertices, and 54\% shallower circuits at 128 vertices for connected graphs.

The shaded regions in Figure~\ref{fig:connected_scaling} show the standard deviation across graphs, expressed as a coefficient of variation (CV = std/mean). For connected graphs, matching decomposition exhibits higher relative variance (CV = 35--46\%) compared to Pauli decomposition (CV = 15--22\%), reflecting matching decomposition's sensitivity to graph structure or choice of matching. Interestingly, matchings that consist purely of edges in $H_1$ (e.g., Gray code paths on hypercubes) consistently require \textit{more} CX gates to implement than Pauli decomposition does--the mixed Hamming structure is essential for reduced CX gate count. 
 Based on these observations, we believe the CX gate count when simulating CTQWs on graphs using Trotterization with matching decomposition will be lower than the CX gate count when using Trotterization with Pauli decomposition on sparse graphs where there are edges that connect vertices with large Hamming distance, but there is no regular structure to the graph. We believe this because Pauli decomposition of edges that connect vertices with large Hamming distance should require nonlocal terms, and the irregularity ensures that there are few terms in the Pauli decomposition that can be simplified. 

The second dataset consists of Erd\H{o}s-R\'enyi random graphs generated with the NetworkX
\texttt{gnp\_random\_graph}($N$, $p$) function~\cite{networkx_gnp_random_graph} for $N \in
\{8, 16, 32, 64, 128 \}$ nodes and $p =.01$ probability of an edge existing. For each $N$, we generate 100 random graphs.

Figure~\ref{fig:er_cx_scaling} shows the CX gate count and circuit depth for \texttt{gnp\_random\_graph}($N$, $.01$). The CX gate count (Fig.~\ref{fig:er_cx_scaling}a) for greedy matching decomposition is less than Pauli decomposition when $N \geq 32$, achieving reductions of 25\% at $N=32$, 33\% at $N=64$, and 31\% at $N=128$. The circuit depth (Fig.~\ref{fig:er_cx_scaling}b) shows similar trends, with greedy matching decomposition producing 37\%, 41\%, and 49\% shallower circuits at $N=32$, 64, and 128 respectively. Interestingly, the compression-aware matching decomposition results are almost identical to the greedy matching decomposition results for this dataset. The shaded regions indicate standard deviation across graphs. For this dataset, the CV for the matching decomposition CX counts decreases from 29\% at 32 vertices to 10\% at 128 vertices, indicating increasingly predictable performance for larger $N$.

\begin{figure}[ht]
\centering
\begin{subfigure}[t]{0.48\textwidth}
  \includegraphics[width=\textwidth]{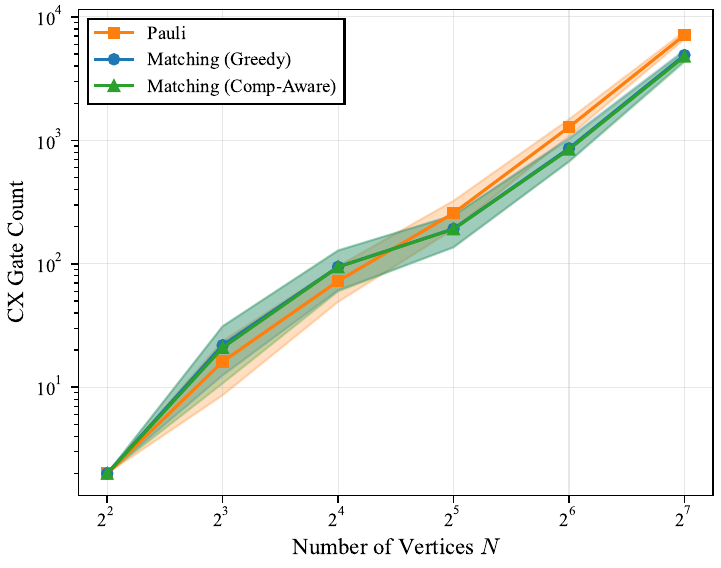}
  \caption{$p = 0.01$}
\end{subfigure}\hfill
\begin{subfigure}[t]{0.48\textwidth}
  \includegraphics[width=\textwidth]{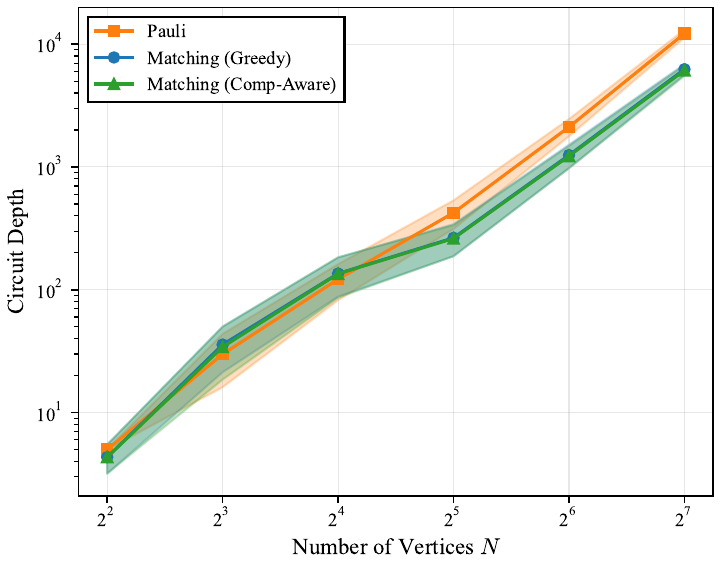}
  \caption{$p = 0.01$}
\end{subfigure}
\caption{CX count and circuit depth for greedy matching decomposition (blue), compression-aware matching decomposition (green), and Pauli decomposition (orange) on Erd\H{o}s--R\'{e}nyi graphs
  $G(N, p)$ with $p=.01$. Shaded regions
  show $\pm 1\sigma$ across 100 graphs. Matching decomposition requires 25\%, 33\%, and 31\% fewer CX gates at N = 32, 64, and 128 respectively, and produces 37\%, 41\%, and 49\% shallower circuits.}
\label{fig:er_cx_scaling}
\end{figure}

The Erd\H{o}s-R\'{e}nyi datasets indicate that sparseness of a graph $G$ may be a property that indicates if matching decomposition requires fewer CX gates than Pauli decomposition when simulating $G$ in the circuit model. However, the connected dataset indicates that matching decomposition can require fewer CX gates than Pauli decomposition on denser graphs if the denser graphs satisfy certain properties such as distribution of edges with particular Hamming weights. 
Interestingly, the connected datasets achieve larger CX reductions than Erd\H{o}s-R\'{e}nyi graphs but exhibit higher variance. 
The connected graphs were designed based on the hypothesis that Hamming weight distribution is the primary structural factor determining matching decomposition performance. However, the persistently high variance suggests that Hamming weight distribution alone does not fully explain performance---other structural properties likely also play significant roles. Future work could investigate these additional structural factors to develop graph generation methods that achieve both large CX reductions and low variance.

\subsection{Commutativity}

We now provide a class of simple graphs where 1.) there exists a matching decomposition such that the matchings commute and 2.) the standard Pauli decomposition has terms that do not pairwise commute. Let us define a \textit{non-commuting Pauli decomposition} to be the standard Pauli decomposition that has terms that do not all pairwise commute and let us define \textit{commuting matching decomposition} of a subgraph $G'$ as a decomposition of the edges of $G'$ into matchings such that the adjacency matrix of each matching of $G'$ pairwise commute. We emphasize that the existence of a commuting matching decomposition does not imply that our matching decomposition algorithm will find this matching.

First, let us consider when the adjacency matrices of two arbitrary subgraphs commute. 
\begin{lemma}
\label{lemma:commuting_subgraphs}
    Let $G = (V,E)$ be a simple, undirected graph. Two arbitrary subgraphs $G' = (V, E')$ and $G'' = (V, E'')$ of $G$ commute if and only if 
    for any $(u, v) \in V \times V$, 
    \vspace{-0.75\baselineskip}
    \begin{align*}
        |\{w \in V\ |\ (u, w) \in E' \text{ and } (w, v) \in E''\}| &= \\ |\{w \in V\ |\ (u, w) \in E'' \text{ and } (w, v) \in E'\}|
    \end{align*}
\end{lemma}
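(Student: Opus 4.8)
The plan is to unwind the statement ``$G'$ and $G''$ commute'' into the matrix identity $A'A'' = A''A'$, where $A'$ and $A''$ denote the adjacency matrices of $G'$ and $G''$ respectively, and then read off both sides entrywise. Since $G'$ and $G''$ are simple undirected graphs on the same vertex set $V$, the matrices $A'$ and $A''$ are symmetric with entries in $\{0,1\}$: $(A')_{uw} = 1$ iff $(u,w) \in E'$, and similarly for $A''$. First I would recall that two matrices commute precisely when they agree in every entry after multiplication, i.e. $(A'A'')_{uv} = (A''A')_{uv}$ for all ordered pairs $(u,v) \in V \times V$.

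Next I would compute each side by the definition of matrix multiplication. We have
\begin{align*}
    (A'A'')_{uv} &= \sum_{w \in V} (A')_{uw}\,(A'')_{wv}, \\
    (A''A')_{uv} &= \sum_{w \in V} (A'')_{uw}\,(A')_{wv}.
\end{align*}
Because every term $(A')_{uw}(A'')_{wv}$ is either $0$ or $1$, and equals $1$ exactly when both $(u,w)\in E'$ and $(w,v)\in E''$, the first sum counts $|\{w \in V : (u,w) \in E' \text{ and } (w,v) \in E''\}|$; symmetrically, the second sum counts $|\{w \in V : (u,w) \in E'' \text{ and } (w,v) \in E'\}|$. Hence the entrywise identity $(A'A'')_{uv} = (A''A')_{uv}$ is literally the claimed cardinality equation, and quantifying over all $(u,v)$ gives the equivalence in both directions.

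There is no substantive obstacle here; the only points requiring a word of care are (i) noting that the graphs being simple and undirected is what guarantees $0/1$ entries so that the bilinear sums become counts of intermediate vertices $w$, and (ii) observing that the lemma's quantification over $(u,v) \in V \times V$ (ordered pairs, including $u=v$) is exactly the indexing set of matrix entries, so no symmetry reduction is needed. If desired, one can add a remark that symmetry of $A'$ and $A''$ lets the two counting sets be rewritten with the roles of the endpoints reversed, but this is not needed for the proof.
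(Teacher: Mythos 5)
Your proposal is correct and follows essentially the same argument as the paper: expand $(A'A'')_{uv}$ and $(A''A')_{uv}$ entrywise, note that the $0/1$ summands count the intermediate vertices $w$ giving length-2 paths of each type, and identify commutativity with the entrywise equality of these counts. No gaps.
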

\begin{proof}
    Let $A'$ be the adjacency matrix of $G'$ and $A''$ the adjacency matrix of $G''$.
    Consider matrix element $[u, v]$ of their product:
    \begin{equation*}
        (A'A'')_{uv} = \sum_{w \in V} A'_{uw} A''_{wv}
    \end{equation*}
    Each summand $A'_{uw} A''_{wv}$ is equal to 1 when $(u, w) \in E'$ and $(w, v) \in E''$, therefore $(A'A'')_{uv}$ counts the number of length-2 walks where the first edge belongs to $E'$ and the second edge belongs to $E''$. 
    $(A''A')_{uv}$ does the same except the first edge belongs to $E''$ and the second to $E'$.
    Therefore, the subgraphs commute if and only if the number of these walks is the same regardless of order of subgraphs.
\end{proof}

Less formally, Lemma~\ref{lemma:commuting_subgraphs} requires that the number of length-2 walks between arbitrary 2 vertices of type ($E'$, $E''$) (i.e. where the first edge belongs to $E'$ and the second edge belongs to $E''$) is the same as the number of walks of type ($E''$, $E'$) between the same vertices, so the walk types are balanced.

In the particular case when the subgraphs are matchings, Lemma~\ref{lemma:commuting_subgraphs} is reduced to the following lemma.

\begin{lemma}
\label{lemma:commuting-matchings}
    Two matchings $G' = (V, E')$ and $G'' = (V, E'')$ of a graph $G$ commute if and only if each connected component of their union is either an isolated vertex ($K_1$), single edge ($K_2$) or a 4-cycle ($C_4$).
\end{lemma}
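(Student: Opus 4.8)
The plan is to apply Lemma~\ref{lemma:commuting_subgraphs} to the special case where $G'$ and $G''$ are matchings, and then translate the path-counting condition into a statement about the structure of the connected components of $G' \cup G''$. The key observation is that because each of $G'$ and $G''$ is a matching, every vertex has degree at most one in each, so every vertex has degree at most two in the union $G' \cup G''$. A graph with maximum degree two is a disjoint union of isolated vertices, paths, and cycles; moreover, since the edges of $G' \cup G''$ come in two colors (say red for $E'$ and blue for $E''$) and each color is a matching, along any path or cycle the colors must alternate. Hence every cycle in the union has even length, and every path alternates colors.

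First I would set up this color-alternation picture and classify the components: an isolated vertex is $K_1$; a component that is a single edge is $K_2$; and otherwise the component is either a path on $\ge 3$ vertices or an even cycle of length $\ge 4$. Next, for each remaining component type I would check the path-balance condition of Lemma~\ref{lemma:commuting_subgraphs} using the fact that a length-2 path of type $(E', E'')$ through a vertex $w$ exists exactly when $w$ has one red edge and one blue edge incident to it, with the red edge going ``toward $u$'' and the blue edge going ``toward $v$''. For a $4$-cycle $w_1 w_2 w_3 w_4$ with edges alternating red, blue, red, blue, the only vertex pairs $(u,v)$ admitting a length-2 path are the two diagonals, and each diagonal is connected by exactly one $(E',E'')$ path and exactly one $(E'',E')$ path (going around the cycle the two different ways), so the condition holds. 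For $K_1$ and $K_2$ there are no length-2 paths at all, so the condition holds trivially and these components never cause a violation.

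Then I would show that any longer path or longer even cycle \emph{breaks} the balance. For a path $w_0 w_1 \cdots w_\ell$ with $\ell \ge 3$, consider its two endpoints $u = w_0$, $v = w_2$ (or an appropriate pair near an end): because the path does not close up, there is a length-2 path from $w_0$ to $w_2$ of one type but \emph{not} of the reversed type, since the edge at $w_0$ has a fixed color and there is no second edge at $w_0$ to provide the other orientation; this gives $1 \neq 0$, violating the lemma's condition. Similarly, for an even cycle of length $2k \ge 6$, pick two vertices at distance $2$ along the cycle: they are joined by exactly one length-2 path (the short way), which has a definite type, whereas the long way around has length $2k-2 > 2$, so it does not contribute; again we get an imbalance $1 \neq 0$ (or a parity mismatch). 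This establishes the ``only if'' direction; the ``if'' direction follows from the component-by-component verification above, since the path-counting condition is additive over connected components (a length-2 path lies entirely within one component).

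The main obstacle I anticipate is being careful with the bookkeeping in the ``only if'' direction: one must choose the witnessing pair $(u,v)$ correctly so that exactly one of the two path types is realized, and handle the edge cases (paths of length exactly $2$ or $3$, the smallest ``bad'' cycle $C_6$) without accidentally picking a pair where both counts happen to be zero or both happen to be one. A clean way to organize this is to argue by contradiction: assume a component that is neither $K_1$, $K_2$, nor $C_4$, take a vertex $w$ of degree $2$ in that component (which exists since the component has an edge and is not just $K_2$), with incident edges colored red and blue going to neighbors $u$ and $v$ respectively; then $(u,v)$ admits the $(E',E'')$ path through $w$, and I must show the $(E'',E')$ count is strictly smaller, i.e.\ that there is no $w'$ with a blue edge to $u$ and a red edge to $v$ — this is where the absence of a short second connection (i.e.\ the component not being exactly $C_4$) is used.
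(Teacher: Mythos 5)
Your proposal is correct and follows essentially the same route as the paper's proof: reduce to Lemma~\ref{lemma:commuting_subgraphs}, observe that the union of two matchings has maximum degree two with alternating edge membership, verify the balance condition on $K_1$, $K_2$, and $C_4$ components, and exhibit an unbalanced distance-2 vertex pair in any longer path or cycle (the paper uses the first and third vertices of a path, exactly your witness). The only difference is presentational (your contradiction-style reorganization at the end), so no further comparison is needed.
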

\begin{proof}
    The degree of each vertex in a matching is either 0 or 1. 
    Therefore, in a union of 2 matchings the degree of each vertex is $\le 2$. 
    Therefore, structurally, the union of 2 matchings consists of isolated vertices, paths or even-length cycles where edges alternate between the two matchings.
    Isolated vertices and non-overlapping single edges do not change the number of length-2 walks in the graph, so they do not affect commutativity according to Lemma~\ref{lemma:commuting_subgraphs}.
    If there is a path that consists of more than 1 edge, then we can always consider its first and third vertices. 
    There is only 1 path of length-2 between these vertices, therefore the matchings would not commute according to Lemma~\ref{lemma:commuting_subgraphs}.
    In case of a cycle, if the length of the cycle is more than 4, then we can apply the same argument as for a path to conclude that these matchings do not commute.
    The only case when they still commute is if the cycle length is either 2 (single overlapping edge) or 4, in which case the number of alternating length-2 walks of each type between any 2 vertices of the cycle is balanced (either 0 or 1).
\end{proof}

Our next result shows that a commuting matching remains commuting under relabeling of vertices. We emphasize that we are using a standard Pauli decomposition pipeline and that there may exist more advanced Pauli decomposition techniques 
that allow one to exactly implement the below unitaries without using Trotterization. 

\begin{lemma}\label{lemma:commutingMDecompPreserved}
    A commuting matching decomposition of the adjacency matrix $A$ of a graph is still commuting after a relabeling of the vertices.
\end{lemma}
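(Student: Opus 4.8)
The plan is to reduce the statement to the elementary fact that conjugation by a fixed invertible matrix is an algebra automorphism and therefore preserves commutators, once we have checked that relabeling genuinely sends a matching decomposition to a matching decomposition. First I would fix notation: a relabeling of the vertices is a bijection $\sigma : V \to V$, and I associate to it the permutation matrix $P_\sigma$ defined by $(P_\sigma)_{uv} = 1$ iff $u = \sigma(v)$, which is orthogonal, so $P_\sigma^{-1} = P_\sigma^{\mathsf{T}}$. For any edge set $F \subseteq E$, write $\sigma(F) = \{(\sigma(u),\sigma(v)) : (u,v) \in F\}$; a direct comparison of matrix entries shows $A_{\sigma(F)} = P_\sigma A_F P_\sigma^{-1}$, i.e.\ relabeling acts on every adjacency matrix by the \emph{same} conjugation.

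Next I would verify that relabeling carries a commuting matching decomposition, as a combinatorial object, to a matching decomposition of the relabeled graph. If $M$ is a matching then no two edges of $M$ share a vertex; since $\sigma$ is a bijection on $V$, no two edges of $\sigma(M)$ share a vertex either, so $\sigma(M)$ is again a matching. If $E = M_1 \sqcup \cdots \sqcup M_k$ is a partition of the edges into matchings, then $\sigma(E) = \sigma(M_1) \sqcup \cdots \sqcup \sigma(M_k)$ is a partition of the relabeled edge set into matchings, which is exactly the matching decomposition of the relabeled graph inherited from the original one.

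With this in place the conclusion is immediate. Assume $A_{M_i} A_{M_j} = A_{M_j} A_{M_i}$ for all $i,j$. Since $X \mapsto P_\sigma X P_\sigma^{-1}$ is an algebra automorphism, it preserves commutators, so for all $i,j$
\begin{equation*}
    [A_{\sigma(M_i)},\, A_{\sigma(M_j)}] = P_\sigma\,[A_{M_i},\, A_{M_j}]\,P_\sigma^{-1} = 0,
\end{equation*}
and hence the relabeled decomposition is still commuting. A purely combinatorial alternative, which perhaps makes the phenomenon more transparent, is to invoke the preceding lemma characterizing commuting matchings: two matchings commute iff every connected component of their union is $K_1$, $K_2$, or $C_4$. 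Because $\sigma$ is a graph isomorphism, it maps $\sigma(M_i) \cup \sigma(M_j)$ isomorphically onto $M_i \cup M_j$ and so preserves the isomorphism types of all connected components; the component condition is therefore preserved for every pair.

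The main ``obstacle'' here is really just careful bookkeeping rather than a mathematical difficulty: one has to note that the \emph{same} matrix $P_\sigma$ simultaneously implements the relabeling on each piece $A_{M_i}$, which is what lets the commutator argument run uniformly over all pairs, and one should observe that the lemma concerns only the algebraic relations among the adjacency matrices, so the embedding of the graph into a state space (and hence the circuit-level details) plays no role.
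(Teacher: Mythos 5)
Your proof is correct and follows essentially the same route as the paper: both represent the relabeling as conjugation by a permutation matrix and note that conjugation preserves the vanishing of all pairwise commutators, with your write-up merely adding the (welcome) explicit checks that $A_{\sigma(F)} = P_\sigma A_F P_\sigma^{-1}$ and that $\sigma$ maps matchings to matchings. Your combinatorial alternative via the $K_1$/$K_2$/$C_4$ component characterization is a nice extra observation but not needed.
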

\begin{proof}
    A relabeling of vertices is a bijection $f(x): \{0, 1\}^n\rightarrow \{0,1\}^n$. Since it is a bijection of the standard basis states, $f(x)$ corresponds to a permutation unitary transformation $U_f$ of the adjacency matrix. Let $M=\{M_k\}$ be a commuting matching decomposition. Then $A=\sum_k A_{M_k}$, where $A_{M_k}$ is the adjacency matrix of $M_k$. Let $A'=U_fAU_f^\dagger$ . First note that $A'$ is an adjacency matrix since it is still 0-1 valued and is symmetric since $f(x)$ is a bijection. Furthermore, $[U_fA_{M_k}U_f^\dagger, U_fA_{M_l}U_f^\dagger ]=0$ for all $k\neq l$. Then $M'=\{f(M_k)\}$ is a commuting matching decomposition of $A'$.
\end{proof}

Consider the adjacency matrix $A=\sum_i^n X_i$. This corresponds to the discrete hypercube graph $Q_n$, which has $2^n$ vertices labeled by all unique n-bitstrings and edges connecting vertices that are Hamming distance one apart.
\begin{lemma}\label{lemma:hyper3Relabling}
    Consider the discrete hypercube with eight vertices, which has adjacency matrix $A=\sum_{i=1}^3 X_i$. Let $f_n(x)=3x \mod(2^n)$. The permutation $f_3(x)$ results in an adjacency matrix $A'$ with a commuting matching decomposition and non-commuting Pauli decomposition (contains $IXI$ and $IYY$).
\end{lemma}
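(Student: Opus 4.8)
The plan is to verify both assertions by direct computation on this single eight‑vertex graph. First I would make the permutation explicit: $f_3(x)=3x \bmod 8$ fixes $0$ and $4$ and acts as the product of transpositions $(1\,3)(2\,6)(5\,7)$ on the labels $0,\dots,7$. For each qubit $j$ let $M_j$ denote the perfect matching of $Q_3$ consisting of the four edges that flip only qubit $j$, so that the adjacency matrix of $M_j$ is exactly the single‑qubit operator $X_j$. Applying $f_3$ edgewise to $M_1,M_2,M_3$ produces the relabeled edge set $E'$ of the graph whose adjacency matrix is $A'=U_{f_3}AU_{f_3}^{\dagger}$, which I would record explicitly.

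The commuting matching decomposition then comes essentially for free. Since $A=X_1+X_2+X_3$ is the sum of the three matching adjacency matrices $X_1,X_2,X_3$, which pairwise commute, $\{M_1,M_2,M_3\}$ is a commuting matching decomposition of $A$; applying Lemma~\ref{lemma:commutingMDecompPreserved} to the relabeling $f_3$ shows at once that $\{f_3(M_1),f_3(M_2),f_3(M_3)\}$ is a commuting matching decomposition of $A'$. As an optional cross‑check one can confirm directly, using the lemma characterizing when two matchings commute, that each pairwise union $f_3(M_j)\cup f_3(M_k)$ is a disjoint union of $4$‑cycles.

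For the Pauli statement I would compute the two relevant coefficients from $c_P=\mathrm{Tr}(P A')/8$, using the fact that a Pauli string $P$ contributes to this trace only through edges $(u,v)\in E'$ whose XOR mask $u\oplus v$ equals the support of the $X$/$Y$ letters of $P$. For $P=IXI$ the support is the single middle bit; exactly two edges of $E'$ have that XOR mask, each contributing $+2$ to the trace, giving $c_{IXI}=1/2\neq 0$. For $P=IYY$ the support is the two rightmost bits; exactly two edges of $E'$ have that XOR mask, and each contributes $-2$ because the two $Y$ factors multiply to $(\pm i)^2=-1$, giving $c_{IYY}=-1/2\neq 0$. Since $IXI$ and $IYY$ disagree only on the middle qubit, where $X$ and $Y$ anticommute, these two terms anticommute, so the standard Pauli decomposition of $A'$ is non‑commuting.

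I do not anticipate a genuine obstacle: the whole argument is a finite verification. The first claim is a one‑line application of Lemma~\ref{lemma:commutingMDecompPreserved} once one notes that $\sum_j X_j$ is already a commuting matching decomposition. The second is a short trace count; the only thing that needs care is consistent bookkeeping of the right‑to‑left, zero‑indexed qubit convention, so that $IXI$ and $IYY$ are paired with the correct edges of $E'$ and the signs coming from the $Y$ factors are handled correctly.
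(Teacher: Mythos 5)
Your proposal is correct and follows essentially the same route as the paper: establish the commuting matching decomposition of $A'$ by combining the bit-flip matchings $X_1,X_2,X_3$ of $Q_3$ with Lemma~\ref{lemma:commutingMDecompPreserved}, then exhibit the anticommuting pair $IXI$, $IYY$ by computing $\mathrm{Tr}(P A')$ (your XOR-mask edge count reproduces the paper's values $\mathrm{Tr}(IXI\,A')=4$ and $\mathrm{Tr}(IYY\,A')=-4$). The only cosmetic difference is that you organize the trace computation by edge masks rather than writing out $A'$ explicitly.
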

\begin{proof}
    For the initial adjacency matrix $A$, there exists a matching decomposition $M = \sum_{j=1}^n M_j$, where each matching $M_j$ corresponds to all edges that differ in exactly bit $j$.
    From Lemma~\ref{lemma:commuting-matchings}, all $M_j$ are pairwise commuting since their edges form 4-cycles, so $M$ is a commuting matching decomposition.
    From Lemma \ref{lemma:commutingMDecompPreserved}, $M'=f_3(M)$ is a commuting matching decomposition for $A'$.
    
    From direct calculation,
    \begin{align}
        U_f=
        \begin{pmatrix}
        1 & 0 & 0 & 0 & 0 & 0 & 0 & 0\\
        0 & 0 & 0 & 1 & 0 & 0 & 0 & 0\\
        0 & 0 & 0 & 0 & 0 & 0 & 1 & 0\\
        0 & 1 & 0 & 0 & 0 & 0 & 0 & 0\\
        0 & 0 & 0 & 0 & 1 & 0 & 0 & 0\\
        0 & 0 & 0 & 0 & 0 & 0 & 0 & 1\\
        0 & 0 & 1 & 0 & 0 & 0 & 0 & 0\\
        0 & 0 & 0 & 0 & 0 & 1 & 0 & 0
        \end{pmatrix}
    \end{align}
    and
    \begin{align}
        A'=
        \begin{pmatrix}
        0 & 0 & 0 & 1 & 1 & 0 & 1 & 0\\
        0 & 0 & 0 & 1 & 0 & 1 & 1 & 0\\
        0 & 0 & 0 & 0 & 1 & 1 & 1 & 0\\
        1 & 1 & 0 & 0 & 0 & 0 & 0 & 1\\
        1 & 0 & 1 & 0 & 0 & 0 & 0 & 1\\
        0 & 1 & 1 & 0 & 0 & 0 & 0 & 1\\
        1 & 1 & 1 & 0 & 0 & 0 & 0 & 0\\
        0 & 0 & 0 & 1 & 1 & 1 & 0 & 0
        \end{pmatrix}.
    \end{align}
    Then $\tr(IXI A')=4$ and $\tr(IYY A')=-4.$ Thus, the Pauli decomposition of $A'$ contains $IXI$ and $IYY$ which anti-commute. 
\end{proof}
We can easily extend this to an infinite family of hypercubes.

\begin{corollary}[Family of hypercubes]\label{corr:FamilyOfHypercubes}
    Let $A_n=\sum_i^nX_i$ be the adjacency matrix of $Q_n$. Index the labels of the vertices as $b_{n-1}b_{n-2}\cdots b_{0}.$ Then the adjacency matrix $A'_n$ associated with the permutation relabeling
    \begin{align}\label{eq:permutation}
        g_i(b_{n-1}b_{n-2}\cdots b_{0})=b_{n-1} \ldots f_3(b_{i+2}b_{i+1}b_{i})\ldots b_{0} 
    \end{align}
    for any $i$ between $0$ and $n-3$ has a commuting matching decomposition and non-commuting Pauli decomposition. 
\end{corollary}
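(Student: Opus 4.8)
The plan is to reduce the corollary directly to Lemma~\ref{lemma:hyper3Relabling} by exploiting the tensor-product structure of both the hypercube adjacency matrix and the permutation $g_i$. First I would observe that the relabeling $g_i$ acts nontrivially only on the three consecutive bit positions $i, i+1, i+2$, applying $f_3$ there, and acts as the identity on all other bits. Consequently the associated permutation unitary factors as $U_{g_i} = I^{\otimes(n-3-i)} \otimes U_{f_3} \otimes I^{\otimes i}$, where $U_{f_3}$ is the $8\times 8$ matrix from Lemma~\ref{lemma:hyper3Relabling}. Applying this to $A_n = \sum_{j=1}^n X_j$, the terms $X_j$ with $j \notin \{i, i+1, i+2\}$ are unchanged because $U_{f_3}$ acts on a disjoint tensor factor and commutes with them, so $A'_n = U_{g_i} A_n U_{g_i}^\dagger$ equals $\sum_{j \notin \{i,i+1,i+2\}} X_j \;+\; (\text{conjugation of } X_i + X_{i+1} + X_{i+2} \text{ by } U_{f_3})$, and by Lemma~\ref{lemma:hyper3Relabling} that last block contains the Pauli terms $IXI$ and $IYY$ on qubits $i, i+1, i+2$.

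Next I would establish the two claims of the corollary. For the non-commuting Pauli decomposition: the Pauli expansion of $A'_n$ is the product of the decomposition of the three-qubit block (on positions $i,i+1,i+2$) with the identity on all other qubits, tensored with the single-qubit $X$ terms on the remaining positions. Since the three-qubit block already contains $IXI$ and $IYY$ (restricted to positions $i,i+1,i+2$) with nonzero coefficients by the trace computation in Lemma~\ref{lemma:hyper3Relabling}, padding these with identities on the other $n-3$ qubits yields two Pauli strings of $A'_n$ that still anticommute (anticommutation on three qubits is preserved under tensoring with identity). Hence $A'_n$ has a non-commuting Pauli decomposition. For the commuting matching decomposition: $A_n$ has the commuting matching decomposition $\{M_1,\dots,M_n\}$ where $M_j$ is the set of edges flipping bit $j$ (each pair $M_j, M_k$ has union a disjoint collection of $4$-cycles, so they commute by Lemma~\ref{lemma:commuting_subgraphs}). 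By Lemma~\ref{lemma:commutingMDecompPreserved}, applying the vertex relabeling $g_i$ sends this to a commuting matching decomposition $\{g_i(M_1),\dots,g_i(M_n)\}$ of $A'_n$. This is exactly the argument already used inside the proof of Lemma~\ref{lemma:hyper3Relabling} for the $n=3$ case, so no new idea is needed here.

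The main obstacle, such as it is, is bookkeeping rather than conceptual difficulty: one must verify carefully that $g_i$ genuinely acts as a tensor product, i.e. that the arithmetic ``multiply the middle three bits, interpreted as an integer in $\{0,\dots,7\}$, by $3$ mod $8$'' does not produce carries into bit $i+3$ or borrow structure that couples to the outer bits. Because $f_3(x) = 3x \bmod 8$ is by definition a permutation of $\{0,\dots,7\}$ that stays within three bits, the map $b_{i+2}b_{i+1}b_i \mapsto f_3(b_{i+2}b_{i+1}b_i)$ is a well-defined permutation of the $8$ patterns on those three positions with the other bits held fixed, which is precisely the tensor-factorization statement; I would spell this out in one or two sentences. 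A secondary point worth a line is noting that the trace-based detection of $IXI$ and $IYY$ from Lemma~\ref{lemma:hyper3Relabling} transfers verbatim: $\operatorname{Tr}\!\big((P \otimes I^{\otimes(n-3)})\, A'_n\big) = \operatorname{Tr}(P\, A'_3)\cdot \operatorname{Tr}(I^{\otimes(n-3)}) / (\text{normalization})$ up to the positions of the extra $X$'s, which contribute zero to the trace against $P\otimes I$ when $P$ is supported on the three-qubit block, so the relevant coefficients remain $\pm 4$ up to the overall $2^{n-3}$ factor and in particular stay nonzero.
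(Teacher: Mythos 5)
Your proposal is correct and follows essentially the same route as the paper's own proof: factor $U_{g_i}=I^{\otimes(n-i-3)}\otimes U_{f_3}\otimes I^{\otimes i}$, note that conjugation only disturbs the three-qubit block so Lemma~\ref{lemma:hyper3Relabling} supplies the anticommuting Pauli terms $X_{i+1}$ and $Y_{i+2}Y_{i+1}$, and obtain the commuting matching decomposition from Lemmas~\ref{lemma:commuting_subgraphs} and~\ref{lemma:commutingMDecompPreserved}. Your extra bookkeeping (the no-carry tensor-factorization check and the trace-transfer argument showing the coefficients stay nonzero) is sound and simply makes explicit what the paper leaves implicit.
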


\begin{proof}
    $A'$ has a commuting matching decomposition by Lemma~\ref{lemma:commuting-matchings} and Lemma~\ref{lemma:commutingMDecompPreserved}. 
    Function $g_i$ induces a unitary $U=I^{\otimes (n-i-3)}\otimes U_f\otimes I^{\otimes i}$. Next, $A'=UAU^\dagger=\sum_j^nUX_jU^\dagger.$ This unitary only acts non-trivially on qubits $i, i+1$, and $i+2$ by definition of Eq.~\eqref{eq:permutation}. From Lemma \ref{lemma:hyper3Relabling}, we know that  $X_{i+1}$ and $Y_{i+2}Y_{i+1}$ are terms in the Pauli decomposition of $A'$, which anti-commute. 
\end{proof}
The previous results hold for an infinite collection of graphs with specific structure. The remainder of this section will provide results for graphs with less rigid structure. If a collection of matchings $\{M_j\}_{j = 1}^k$ satisfies 
\begin{equation*}
    A_G = A_{M_1} + \ldots + A_{M_k}
\end{equation*}
where $A_G$ is the adjacency matrix of some graph $G$, and $A_{M_j}$ is the adjacency matrix of matching $j$, $e^{-i A_Gt} = e^{-i \sum_{j=1}^m A_{M_j} t}$. If all $A_{M_j}$ pairwise commute, then  $e^{-i \sum_{j=1}^m A_{M_j} t} = \prod_{j=1}^k e^{-i A_{M_j} t}$. Thus, a CTQW on graph $G$ is exactly implementable using matching decompositions if and only if the adjacency matrices of all matchings in $\{M_j\}_{j=1}^k$ pairwise commute. 
For non-commuting operators we have the approximation
$$(e^{-i \frac{t}{N}A_{M_1}} e^{-i \frac{t}{N}A_{M_2}} \ldots e^{-i \frac{t}{N}A_{M_k}})^N = e^{-i t A_G} + O(t^2/N)$$
with the error bound depending on the commutator norms $\|[A_{M_i}, A_{M_j}]\|$.

A result similar to Lemma~\ref{lemma:commuting-matchings} has also been proven for perfect matchings in \cite{akbari2007commuting}.

\begin{lemma} 
\cite{akbari2007commuting}
\label{lemma:commute}
    Two perfect matchings of a graph G commute if and only if each connected component of their union is either a single edge ($K_2$) or a 4-cycle ($C_4$). In particular, if a graph $G$ of order $n$ has two disjoint commuting perfect matchings $M1$ and $M2$, then each connected component of $M1 \cup M2$ is a copy of $C_4$, and thus 4 divides $n$.
\end{lemma}

The authors of \cite{akbari2007commuting} also remark:
\begin{remark} \cite{akbari2007commuting}
    By the previous lemma, we conclude that for a k-regular graph $G$, $G$ is decomposable into commuting perfect matchings if and only if $G$ has a 1-factorization such that the union of any two distinct perfect matchings in the 1-factorization is a disjoint union of copies of $C_4$.
\end{remark}
Here, a 1-factorization of a $k$-regular graph $G$ is defined as a partition of the edges into $k$ perfect matchings. It is known that discrete hypercubes, $Q_n$, satisfy this property \cite{behague2019semi, laufer1980strongly}. $Q_n$ is a graph on $2^n$ vertices such that each vertex has an $n$ bitstring label, and two vertices $u$ and $v$ are adjacent if and only if the Hamming distance between their labels is exactly 1. Interestingly, $Q_n$ is a type of circulant graph, the walks on which are known to be exactly implementable in the circuit model \cite{loke2017efficient}. However, this approach requires the Quantum Fourier Transform (QFT) to change the computational basis, along with multi-controlled rotation gates. The matching decomposition approach, in general, appears to require fewer controlled gates to implement walks on $Q_n$ than the QFT approach. 

However, it is not clear which graphs satisfy this property in general. One result from \cite{akbari2009commutativity} shows that the result holds for complete bipartite graphs if each partition has $2^n$ vertices for some $n$.
\begin{lemma}\cite{akbari2009commutativity}
     Let $n$ be a natural number. The graph $K_{n,n}$ is decomposable into commuting perfect matchings if and only if $n$ is a power of 2.
\end{lemma}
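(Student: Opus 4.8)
The plan is to pass from matchings to permutations and then to finite group actions. A perfect matching of $K_{n,n}$ with parts $X=\{x_0,\dots,x_{n-1}\}$ and $Y=\{y_0,\dots,y_{n-1}\}$ is the same as a permutation $\sigma\in S_n$ (match $x_i$ to $y_{\sigma(i)}$), and a $1$-factorization is a set $S=\{\sigma_1,\dots,\sigma_n\}\subseteq S_n$ that is sharply transitive: for every pair $(i,j)$ there is exactly one $\sigma_k$ with $\sigma_k(i)=j$, equivalently $\sigma_k\sigma_l^{-1}$ is fixed-point-free for $k\neq l$. By the lemma characterizing when two matchings commute (every component of their union is $K_1$, $K_2$, or $C_4$), the adjacency matrices $A_{\sigma_k}$ and $A_{\sigma_l}$ commute precisely when every alternating cycle of $\sigma_k\cup\sigma_l$ has length at most four, i.e.\ every cycle of $\sigma_k\sigma_l^{-1}$ has length at most two, i.e.\ $\sigma_k\sigma_l^{-1}$ is an involution (the identity or a product of disjoint transpositions). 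Equivalently, writing $A_\sigma$ as the block matrix with off-diagonal blocks $P_\sigma$ and $P_\sigma^{T}$ and computing $A_\sigma A_\tau$ directly yields the same condition $(\sigma\tau^{-1})^2=\mathrm{id}$.

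For the forward direction, suppose $S$ is a commuting $1$-factorization. Replacing each $\sigma_k$ by $\sigma_1^{-1}\sigma_k$ corresponds to relabeling the vertices in $Y$, so by Lemma~\ref{lemma:commutingMDecompPreserved} the new collection is again a commuting $1$-factorization; hence we may assume $\sigma_1=\mathrm{id}$. Then $\sigma_1\sigma_k^{-1}=\sigma_k^{-1}$ is an involution, so every $\sigma_k$ is an involution, and $\sigma_k\sigma_l^{-1}=\sigma_k\sigma_l$ is an involution for $k\neq l$. For involutions $a,b$ with $ab$ an involution one has $(ab)^2=e$, hence $ab=b^{-1}a^{-1}=ba$; thus the $\sigma_k$ pairwise commute and $H:=\langle\sigma_1,\dots,\sigma_n\rangle$ is an elementary abelian $2$-group, so $|H|=2^r$ for some $r$. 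Since $S\subseteq H$ is transitive on $\{0,\dots,n-1\}$, the group $H$ acts transitively there, and orbit--stabilizer gives $n\mid|H|=2^r$; therefore $n$ is a power of $2$.

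For the converse, suppose $n=2^r$. Identify both parts of $K_{n,n}$ with $\mathbb{F}_2^r$ and, for $a\in\mathbb{F}_2^r$, let $M_a$ be the matching with permutation $\sigma_a(v)=v+a$. Every edge $(x_v,y_w)$ lies in exactly one $M_a$, namely $a=v+w$, so $\{M_a\}$ is a $1$-factorization; the $\sigma_a$ pairwise commute, and $\sigma_a\sigma_b^{-1}=\sigma_{a+b}$ satisfies $\sigma_{a+b}^2=\mathrm{id}$, so by the first paragraph the adjacency matrices pairwise commute. For $a\neq b$, $\sigma_{a+b}$ is fixed-point-free, so $M_a\cup M_b$ is a disjoint union of $4$-cycles, consistent with the Remark above. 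Hence $K_{n,n}$ has a commuting perfect matching decomposition.

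The heart of the argument is the forward direction: once one normalizes a single matching to the identity permutation, the commutation constraint collapses the whole family into an elementary abelian $2$-group, after which the divisibility $n\mid 2^r$ is immediate from transitivity. The steps requiring care are (i) the reduction of ``the adjacency matrices commute'' to ``$\sigma_k\sigma_l^{-1}$ is an involution,'' which we take from the already-proven lemma on unions of two matchings; (ii) the observation that relabeling only the $Y$-part is still a relabeling of all vertices, so Lemma~\ref{lemma:commutingMDecompPreserved} applies; and (iii) checking that the $1$-factorization (sharp transitivity) condition is precisely the fixed-point-freeness needed to invoke orbit--stabilizer.
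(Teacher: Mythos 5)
Your proof is correct. Note, however, that the paper does not prove this statement at all---it is imported verbatim from the cited reference \cite{akbari2009commutativity}---so there is no in-paper argument to compare against; what you have written is essentially the standard (and, as far as I can tell, the reference's own) argument, reconstructed correctly. The key reductions all check out: encoding perfect matchings of $K_{n,n}$ as permutations turns a $1$-factorization into a sharply transitive subset of $S_n$; the block computation $A_\sigma A_\tau = \mathrm{diag}(P_\sigma P_\tau^{T}, P_\sigma^{T} P_\tau)$ shows commutation of adjacency matrices is exactly $(\sigma\tau^{-1})^2=\mathrm{id}$, which matches the paper's unlabeled lemma on unions of two matchings ($C_4$ components $\leftrightarrow$ $2$-cycles of $\sigma\tau^{-1}$, $K_2$ components $\leftrightarrow$ fixed points); normalizing $\sigma_1=\mathrm{id}$ forces the family into a set of pairwise commuting involutions generating an elementary abelian $2$-group, and orbit--stabilizer gives $n\mid 2^r$; the converse via translations of $\mathbb{F}_2^r$ is the hypercube-style construction the paper already alludes to. One small simplification: you do not actually need Lemma~\ref{lemma:commutingMDecompPreserved} for the normalization step, since $(\sigma_1^{-1}\sigma_k)(\sigma_1^{-1}\sigma_l)^{-1}=\sigma_1^{-1}(\sigma_k\sigma_l^{-1})\sigma_1$ is conjugate to $\sigma_k\sigma_l^{-1}$ and conjugation preserves being an involution---though invoking that lemma is also perfectly valid, since relabeling only the $Y$-part is a genuine vertex relabeling. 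Your proof would be a worthwhile self-contained addition if the authors wanted the paper to stand alone.
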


Thus, a characterization of the graphs that satisfy Lemma~\ref{lemma:commute} will also provide a set of graphs $G$ such that the CTQWs on $G$ can be exactly implemented using matching decomposition.

\section{Discussion}\label{sec:discussion}
In this work, we develop a new matching heuristic that requires only polynomial complexity classical overhead and can 
be used as a basis for Trotterization when approximating CTQWs in the circuit model.  
We use the adjacency matrices of these matchings as Hamiltonians in a Trotterization of $e^{-i A_G t}$ for graphs $G$ with adjacency matrix $A_G$ and compare these methods to Trotterization with Pauli decomposition \cite{Pauli}. While both matching decomposition and Pauli decomposition have similar quantum gate complexity, matching decomposition achieves lower CX gate counts when used to simulate CTQWs on sparse graphs, as demonstrated in our experiments.

We first find that the 2-norm operator difference between both matching decompositions and the CTQW evolution on sets of 8-, 16-, and 32-vertex graphs is comparable to the 2-norm operator difference between Pauli decomposition and the CTQW evolution. We also simulate CTQWs on sets of connected graphs with between 8 and 128 vertices, as well as Erd\H{o}s-R\'enyi graphs with between 8 and 128 vertices with $p=.01$, and compare the gate counts of the matching decomposition to the gate counts for Pauli decomposition when implemented with Qiskit using circuit optimization level 3 on the same datasets. We find that the greedy matching decomposition requires approximately 34\%, 42\%, and 45\% fewer CX gates than Pauli decomposition respectively on the connected 32-, 64-, and 128-vertex graphs and the compression-aware matching requires approximately .5\%, 39\%, 56\%, and 70\% fewer CX gates than Pauli decomposition on the connected 16-, 32-, 64-, and 128-vertex datasets when implementing matching decomposition. 

While these studies give evidence that matching decomposition can efficiently simulate CTQW Hamiltonian evolution, there are still several open research directions. First, more precise bounds for the CX gate scaling of these approaches would validate the utility of this approach for near-term quantum devices. Second, previous works show that the order of graph matching techniques for state preparation can greatly impact CX gate count \cite{gonzales2025efficient}, so developing heuristics for ordering the matchings to be used in Trotterization can further decrease the gate count. Similarly, randomizing the gates based on the weight of terms in the Hamiltonian can lead to shallower circuits \cite{campbell2019random}. 

Third, a better characterization of when matching decomposition requires fewer gates and shorter circuits than Pauli decomposition would be useful for simulating CTQW Hamiltonian evolution in the circuit model. Fourth, CTQWs are well-known for finding marked nodes in a graph. Can the matchings approaches be used to find marked vertices with high probability in shorter time than the full CTQW Hamiltonian evolution can find them? 
Fifth, it is worth noting that the labels of the nodes may impact circuit depth, thus finding permutation unitaries that reduce gate count further may be useful  \cite{gaidai2026decomposition}. Finally, while matchings are a natural method for decomposing graphs and walks on them are easily implementable in the circuit model, it would be interesting to study if there are other useful graph structures that can be implemented exactly in the circuit model.

Our compression-aware matching algorithm recognizes when multi-controlled gates share the same control pattern, allowing some controls to merge or be eliminated. Pauli decomposition does not inherently recognize controlled gates, which we believe is the source of the gap we observe in the experiments.

\section*{Author contributions}
MA developed the CTQW graph compression technique, wrote code, and gathered data. DD debugged errors and wrote code to analytically verify unitary equivalences. 
IG wrote the proof of Lemma~\ref{lemma:commuting_subgraphs} and \ref{lemma:commuting-matchings}.
AG wrote code, found exact examples, and developed the theoretical idea on the weight-reducing qubits in the edge compression. 
ZHS suggested the experiments to test the algorithm and secured funding. 
RH developed the graph matching decomposition and secured funding. All authors wrote, read, edited, and approved the final manuscript.

\section*{Acknowledgments}
M. Atallah, I. Gaidai, and R. Herrman acknowledge DE-SC0024290. D. Dilley, A. Gonzales, and Z. Saleem acknowledge DOE-145-SE-14055-CTQW-FY23. The funder played no role in study design, data collection, analysis and interpretation of data, or the writing of this manuscript. 

\section*{Competing interests}
All authors declare no financial or non-financial competing interests. 

\section*{Code and Data Availability}
\label{sec:codeAndDataAvail}
The code and data for this research can be found at \url{https://github.com/Mostafa-Atallah2020/dyn-CTQW}. 

\vspace{20pt}
\noindent
\framebox{\parbox{\linewidth}{
The submitted manuscript has been created by UChicago Argonne, LLC, Operator of 
Argonne National Laboratory (``Argonne''). Argonne, a U.S.\ Department of 
Energy Office of Science laboratory, is operated under Contract No.\ 
DE-AC02-06CH11357. 
The U.S.\ Government retains for itself, and others acting on its behalf, a 
paid-up nonexclusive, irrevocable worldwide license in said article to 
reproduce, prepare derivative works, distribute copies to the public, and 
perform publicly and display publicly, by or on behalf of the Government.  The 
Department of Energy will provide public access to these results of federally 
sponsored research in accordance with the DOE Public Access Plan. 
http://energy.gov/downloads/doe-public-access-plan.}}

\bibliographystyle{unsrt}
\bibliography{refs}

\appendix
\onecolumngrid
\newpage

\section{Operator norm convergence plots}\label{sec:normplots}
Figures~\ref{fig:validation_plots_appendix} 
shows the operator 2-norm difference between the exact CTQW Hamiltonian and the greedy and compression-aware matching decomposition and Pauli decomposition approximations. Both methods exhibit comparable convergence rates, with errors scaling as $O(t^2/N)$.
\begin{figure*}[h]
   \centering
    \begin{subfigure}[t]{0.5\linewidth}
      \includegraphics[width=\linewidth]{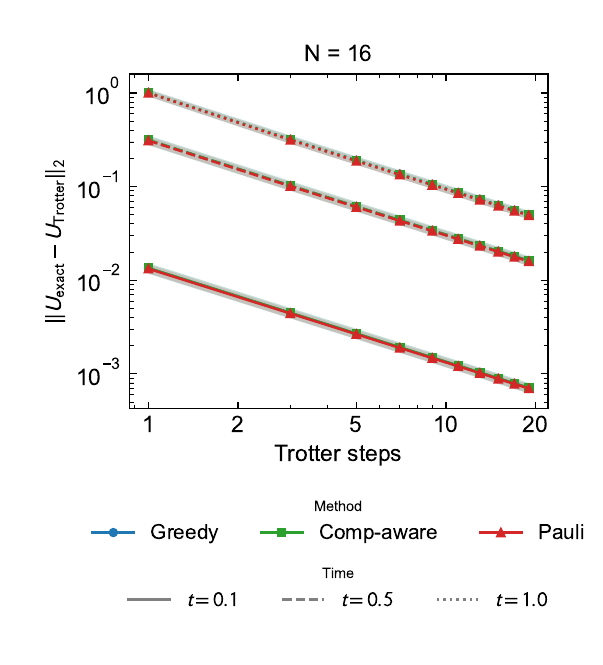}
      \caption{}\label{fig:comparison16vertex}
    \end{subfigure}%
    \begin{subfigure}[t]{0.5\linewidth}
      \includegraphics[width=\linewidth]{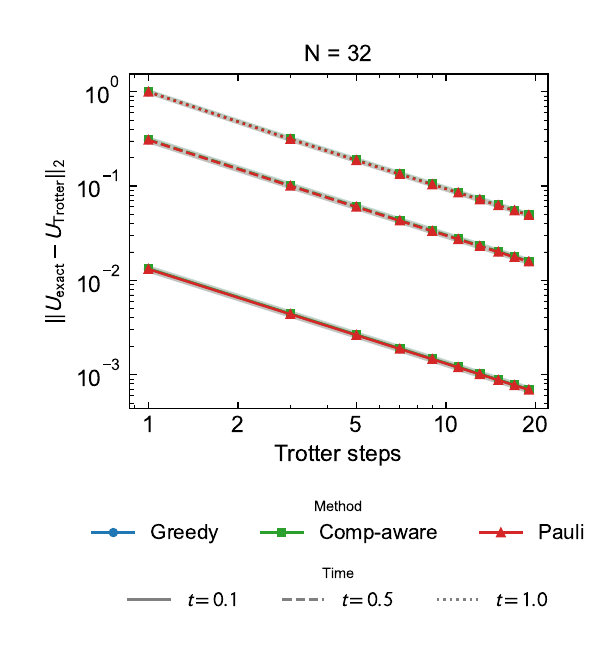}
      \caption{}\label{fig:comparison32vertex}
    \end{subfigure}
    \caption{Operator difference 2-norm $\|e^{-iAt} - U_{\text{Trotter}}\|_2$ between the exact CTQW evolution and Trotterized approximations for sparse connected graphs: (a) 200 graphs on 16 vertices, and (b) 200 graphs on 32 vertices. Lines show mean values over all graphs for greedy matching decomposition (blue, circles), compression-aware matching decomposition (green squares) and Pauli decomposition (red triangles), with shaded regions indicating $\pm 1$ standard deviation. Line styles distinguish evolution times: solid ($t=0.1$), dashed ($t=0.5$), and dash-dot ($t=1.0$). Trotter steps range from $N=10$ to $N=100$.}\label{fig:validation_plots_appendix}
\end{figure*}


\end{document}